\newtheorem{Theorem}{Theorem}[section]
\newtheorem{Lemma}[Theorem]{Lemma}
\newtheorem{Definition}[Theorem]{Definition}
\newtheorem{Proposition}[Theorem]{Proposition}
\newtheorem{Remark}[Theorem]{Remark}
\def\nocolor#1{}
\begin{document}


\title{Uniqueness of phase retrieval from offset linear canonical transform}

\author{Jing Liu and Haiye Huo\thanks{Corresponding author.}\\
\normalsize{Department of Mathematics, School of Mathematics and Computer Sciences},\\ \normalsize{Nanchang University, Nanchang~330031, Jiangxi, China} \\
\normalsize{E-mail: liujing5965@163.com, hyhuo@ncu.edu.cn}}

\date{}
\maketitle

\textbf{Abstract:}
The classical phase retrieval refers to the recovery of an unknown signal from its Fourier magnitudes, which is widely used in fields such as quantum mechanics, signal processing, optics, etc. The offset linear canonical transform (OLCT), which is a more general type of linear integral transform including Fourier transform (FT), fractional Fourier transform (FrFT), and linear canonical transform (LCT) as its special cases. Hence, in this paper, we focus on the uniqueness problem of phase retrieval in the framework of OLCT. First, we prove that all the nontrivial ambiguities in continuous OLCT phase retrieval can be represented by convolution operators, and demonstrate that a continuous compactly supported signal can be uniquely determined up to a global phase from its multiple magnitude-only OLCT measurements. Moreover, we investigate the nontrivial ambiguities in the discrete OLCT phase retrieval case. Furthermore, we demenstrate that a nonseparable function can be uniquely recovered from its magnitudes of short-time OLCT (STOLCT) up to a global phase. Finally, we show that signals which are bandlimited in FT or OLCT domain can be reconstructed from its sampled STOLCT magnitude measurements, up to a global phase, providing the ambiguity function of window function satisfies some mild conditions.

\textbf{Keywords:}
Phase retrieval; offset linear canonical transform; ambiguity function; short-time offset linear canonical transform; bandlimited

\textbf{MSC Classification:} 42C15, 42C40, 94A12

\section{Introduction}\label{sec:I}

The problem of recovering an original signal from its magnitudes of FT measurements is called phase retrieval \cite{BCM14,YL24,BCE06,CS21,QT16}. It is widely used in optical imaging \cite{Seifert2004}, astronomy \cite{Bruck1979}, X-ray crystallography \cite{millane1990}, signal processing \cite{CSV13}, and quantum mechanics \cite{CMS01}, etc. As we all known, phase retrieval problem is ill posed, and an unknown signal cannot be reconstructed from its Fourier amplitude, due to trivial ambiguities. The non-uniqueness of classical phase retrieval has been well studied, and can be avoided by imposing some prior information or adding additional measurements. For more details, we refer the reader to references \cite{LL21,KK14,ADG19,BP15,CCS22,SEC15}.

From a physics perspective, the magnitude of FT can be seen as the intensity of a wave in the far field. However, it is very challenging to establish a unified framework that can effectively and accurately calculate near-field diffraction. Note that the near-field diffraction can be characterized by Fresnel transform or fractional Fourier transform (FRFT). Hence, in this paper, we study the phase retrieval problem for a more general linear integrable transform--OLCT \cite{Abe1994}. The OLCT has five free parameters, which offer more freedom and flexibility in real applications. Many famous linear integrable transforms, such as FT, Fresnel transform, FRFT \cite{Almeida1994}, LCT \cite{Barshan1997}, are all special cases of the OLCT. Therefore, it is meaningful to investigate the phase retrieval problem under the framework of OLCT. However, to the best of our knowledge, there are few studies on phase retrieval for OLCT. To fill up this gap, in this paper, we consider the OLCT phase retrieval problem and prove that signals can be uniquely recovered from the magnitude of OLCT (or STOLCT) measurements, up to a global phase, respectively.

Since there exists some trivial ambiguities in the phase retrieval problem, such as rotation, shift and conjugate reflection, some prior information or additional measurements on the function are required to ensure that the original signal can be uniquely recovered. Among them, a common approach is to suppose that the continuous-time signal is compactly supported. Beinert \cite{Beinert17} stated that an original signal $f$ and the unknown interference $h$ with compact support can be uniquely reconstructed from their Fourier magnitudes $\mathcal{F}f$, $\mathcal{F}h$, and $\mathcal{F}(f+h)$, where $\mathcal{F}$ represents the FT operator. Recently, many scholars have extended the classical phase retrieval results from the FT domain to a more generalized linear integral transform domain, such as FRFT \cite{CHS15,AJ16,STL22,YTW24}, LCT \cite{Beinert2017,ChenYang2022,LWH24}. Jaming \cite{Philippe2014} showed that a compactly supported signal can be exactly recovered from its multiple FrFT intensities, up to a global phase. Beinert \cite{Beinert2017} gave a characterization of compactly supported functions which have same LCT magnitudes. Chen et al.\cite{ChenYang2022} demonstrated that the nontrivial ambiguities can be represented as a convolution of two compactly supported functions. Unfortunately, there is no conclusion about OLCT phase retrieval for compactly supported functions. Therefore, motivated by \cite{BP15,Beinert2017,ChenYang2022}, one of our main contributions is to derive some uniqueness conditions of OLCT phase retrieval for both continuous and discrete signals that are compactly supported.

Another approach to overcome the trivial ambiguities is to obtain more information by additional magnitude-only measurements, such as short-time FT (STFT) \cite{LMJ23,QingyueZhangZhenli2023,LLZ21}. Alaifari et al \cite{AlaifariRima2021} showed that signals can be uniquely determined by their STFT magnitude up to a global phase, providing the ambiguity function of the window function vanishes only on some of the time-frequency plane. Zhang \cite{Zhang18} proposed some sufficient conditions that can guarantee the uniqueness of discrete short-time FrFT phase retrieval. Li et al \cite{LiRui2024} investigated the phase retrieval problem for short-time LCT (STLCT) and established some uniqueness results for nonseparable functions, bandlimited functions, and cardinal B-spline functions. Li et al. \cite{LZL25} introduced a generalized Paley-Wiener space, and derived some uniqueness results for a complex-valued signal that is reconstructed from its sampled STLCT magnitudes. Zhang et al. \cite{ZSL21} considered two sufficient conditions that ensure arbitrary signal can be uniquely recovered form the magnitude of discrete STOLCT measurements. However, to the best of our knowledge, there are no studies on phase retrieval for continuous STOLCT. To fill up this gap, in this paper, we consider the uniqueness problem for continuous STOLCT phase retrieval. More precisely, we extend the work of Alaifari et al \cite{AlaifariRima2021} and Li et al \cite{LiRui2024} to STOLCT domain, and show that a real-valued signal which are nonseparable can be uniquely recovered from their STOLCT magnitudes, up to a sign phase, providing the ambiguity function of the window satisfies some mild conditions. Moreover, we prove that the complex-valued function which is bandlimited in FT domain or OLCT domain, can be uniquely determined from the sampled STOLCT magnitudes, up to a global phase.

The rest of the paper is organized as follows. In Section~\ref{sec:P}, we review some preliminary results concerning properties of OLCT, STOLCT, ambiguity function, and bandlimited function. In Section~\ref{CPL}, we investigate uniqueness of OLCT phase retrieval for continuous compactly supported functions. We extend the characterization of nontrivial ambiguities to discrete OLCT in Section \ref{DPR}. In Section \ref{STPR}, we study the uniqueness of STOLCT phase retrieval for nonseparable functions and bandlimited functions, respectively. Finally, we conclude the paper in Section \ref{conl}.

\section{Preliminaries}\label{sec:P}

In this section, we summarize all the fundamental definitions used in this paper, including OLCT, STOLCT, ambiguity function, and bandlimited function, etc.

\subsection{Continuous OLCT}

\begin{Definition}\cite{Adrian2007,Huo19C,Huo19M}
The OLCT of a continuous function $f(t)\in L^2(\mathbb{R})$ with real parameters $A=\left[
      \begin{array}{cc|c}
        a & b & y_0 \\
        c & d & \omega_0 \\
      \end{array}
      \right]$
is given by
\begin{equation}\label{def:COLCT}
O^{A}f(u)=O^A\left[f(t)\right](u)=
    \begin{cases}
    \int_{-\infty}^{+\infty}f(t)K_{A}(t,u){\rm{d}}t, & b\ne0,\\
     \sqrt{d}e^{j\frac{cd}{2}(u-y_0)^2+ju\omega_0}f\left(d(u-y_0)\right), & b=0,\\
     \end{cases}
\end{equation}
where the transform kernel $K_A$ is denoted as
\begin{equation}\label{ker}
K_{A}(t,u)=\frac{1}{\sqrt{j2\pi b}}e^{\frac{j}{2b}\left[du^2+dy_0^2+at^2+2t(y_0-u)-2u(dy_0-b\omega_0)\right]},
\end{equation}
and $ad-bc=1$.
\end{Definition}

For $b=0$, the OLCT reduces to a chirp multiplication operator. Hence, without loss of generality, we assume $b>0$ throughout the paper.

Note that $O^A$ is a unitary operator on $L^2(\mathbb{R})$, and the inverse OLCT is defined by an OLCT with parameters
$A^{-1}=\left[
      \begin{array}{cc|c}
        d & -b & b\omega_0-dy_0 \\
        -c & a & cy_0-a\omega_0 \\
      \end{array}
      \right]$\cite{Huo19M,Pei2003},
i.e.,
\[
f(t)=C\int_{-\infty}^{+\infty}O^Af(u)K_{A^{-1}}(t,u){\rm{d}}u,
\]
where
\[
C=e^{\frac{j}{2}\left(cd{y_0}^2-2ady_0\omega_0+ab{\omega_0}^2\right)}.
\]

Let $f\in L^2(\mathbb{R})$, the continuous FT of $f$ is defined by
\begin{equation}\label{ft}
\mathcal{F}f(u)=\int_{-\infty}^{+\infty}f(t)e^{-jut}{\rm{d}}t,
\end{equation}
and inverse FT is given by
\begin{equation}\label{ift}
f(t)=\mathcal{F}^{-1}[\mathcal{F}f(u)](t)=\frac{1}{2\pi}\int_{-\infty}^{+\infty}\mathcal{F}f(u)e^{jut}{\rm{d}}u.
\end{equation}
Then, the relationship between the continuous OLCT and FT can be described as follows:
\begin{align}
O^{A}f(u)=\frac{1}{\sqrt{j2\pi b}}e^{j\left[\frac{d}{2b}(u-y_0)^2+u\omega_0\right]}\mathcal{F}\left(f(\cdot)e^{j\frac{a}{2b}(\cdot)^2}\right)
   \left(\frac{u-y_0}{b}\right).\label{rel:COLCT}
\end{align}

The convolution of two functions $f,\;g\in L^2(\mathbb{R})$ is defined by:
\[
(f*g)(\tau)=\int_{-\infty}^{+\infty}f(t)g(\tau-t){\rm{d}}t.
\]

\subsection{Discrete OLCT}

In this paper, we consider a discrete-time signal $\mathbf{x}=(x[n])_{n\in \mathbb{Z}}\in \ell^2$ with support length $N$, i.e., there exists a constant $n_0\in{\mathbb{Z}}$, satisfies
\[
x[n]=0,\;\;\mbox{for}\; n<n_0\; \mbox{and}\; n\geq n_0+N.
\]
Similarly, the discrete Fourier transform of $\mathbf{x}$ is given by
\[
\mathcal{F}[\mathbf{x}](u)=\widehat{\mathbf{x}}(u)=\sum_{n\in \mathbb{Z}}x[n]\ e^{-jun}, \;\;u\in[-\pi,\pi).
\]
The autocorrelation signal $\mathbf{a}:=(a[n])_{n\in \mathbb{Z}}$ associated with $\mathbf{x}$ is denoted as
\[
a[n]:=\sum_{k\in\mathbb{Z}}{\overline{{x[k]}}}x[k+n], \;n\in \mathbb{Z}.
\]
Note that $\mathbf{a}$ is conjugate symmetric, that is to say, $a[n]=\overline{a[-n]}$. It is easy to prove
\[
A(u):=|\widehat{\mathbf{x}}(u)|^2 =\sum_{n\in \mathbb{Z}}\sum_{k\in \mathbb{Z}}x[n]\overline{x[k]}e^{-ju(n-k)}=\sum_{n\in \mathbb{Z}}a[n]e^{-jun}=\widehat{\mathbf{a}}(u),
\]
where $A(u)$ is defined as the nonnegative autocorrelation function of $\mathbf{x}$.

Discretizing the continuous OLCT form in (\ref{def:COLCT}), we can obtain the definition of discrete OLCT as follows.
\begin{Definition}
For real parameters $A=\left[
      \begin{array}{cc|c}
        a & b & y_0 \\
        c & d & \omega_0 \\
      \end{array}
      \right]$
with $ad-bc=1$, the OLCT of a discrete-time signal $\mathbf{x}=(x[n])_{n\in\mathbb{Z}}\in \ell^2$ is defined by
\begin{equation*}
O^{A}\mathbf{x}(u)=
    \begin{cases}
   \sum_{n\in\mathbb{Z}}x[n]h_{A}(n,u), & b\ne0,\\
     \sqrt{d}e^{j\frac{cd}{2}(u-y_0)^2+ju\omega_0}x[d(u-y_0)], & b=0,\\
     \end{cases}
\end{equation*}
where
\[
h_{A}(n,u)=\sqrt{\frac{1}{j2\pi b}}e^{\frac{j}{2b}\left[a n^{2}+d{y_0}^2+2n(y_{0}-u)-2u(d y_{0}-b\omega_{0})+d u^{2}\right]}.
\]
\end{Definition}
Similarly, we only consider the case of $b\neq 0$ in the rest of paper. The connection between discrete OLCT and discrete FT can be represented as
\begin{equation}\label{dis:OF}
O^{A}\mathbf{x}(u)=\sqrt{\frac{1}{j2\pi b }}e^{j\left[\frac{d}{2b}(y_{0}-u)^{2}+u\omega_{0}\right]}\mathcal{F}{\left[\mathbf{x}[\cdot]e^{\frac{ja}{2b}(\cdot)^{2}}\right]}\left(\frac{u-y_{0}}{b}\right).
\end{equation}

\subsection{Short-time OLCT}

Given a window function $\varphi\in L^2(\mathbb{R})$, the STFT of $f\in L^2(\mathbb{R})$, is given by
\begin{align}
\mathcal{V}_\varphi f(v,u)=\int_{-\infty}^{+\infty}f(t)\overline{\varphi (t-v)}e^{-jut}{\rm{d}}t,\;\; v,u\in \mathbb{R}.
\end{align}
Set $T_\tau f(t):=f(t-\tau)$, then we know
\begin{equation}\label{sft}
\mathcal{V}_\varphi f(v,u)=e^{-jvu}\mathcal{F}^{-1}\left[\mathcal{F}f \cdot T_u\overline{\mathcal{F}\varphi}\right](v).
\end{equation}

Similarly, the STOLCT of a function $f\in L^2(\mathbb{R})$ with a window function $\varphi\in L^2(\mathbb{R})$ is denoted as \cite{Huo19M}:
\begin{align}
\mathcal{V}_\varphi^{A} f(v,u)=\mathcal{V}_\varphi^{A} \left[f(t)\right](v,u)=\int_{-\infty}^{+\infty}f(t) \overline{\varphi (t-v)}K_{A}(t,u){\rm{d}}t, \;\; v,u\in \mathbb{R},
\end{align}
where the transform kernel $K_A$ is defined the same as in (\ref{ker}), parameters $a$,\;$b\neq 0$,\;$c,\;d$,\;$y_0$,\;$\omega_0\in\mathbb{R}$, and $ad-bc=1$.
It is easy to see that
\begin{equation}\label{re:St}
\mathcal{V}_\varphi^{A} f(v,u)=\frac{1}{\sqrt{j2\pi b}}e^{j\left[\frac{d}{2b}(u-y_0)^2+u\omega_0\right]}\mathcal{V}_\varphi\left\{f(\cdot)e^{\frac{ja(\cdot)^2}{2b}}\right\} \left(v,\frac{u-y_0}{b}\right).
\end{equation}
\subsection{Ambiguity function}

The ambiguity function of $f(t)\in L^2(\mathbb{R})$ is defined as \cite{Nurkifayah2024}
\begin{equation}\label{am}
  \mathcal{A}f(\tau,\eta)=\int_{-\infty}^{+\infty}f\left(t+\frac{\tau}{2}\right)\overline{f \left(t-\frac{\tau}{2}\right)}e^{-jt \eta}{\rm{d}}t,
\end{equation}
and the cross-ambiguity function of $f(t),g(t)\in L^2(\mathbb{R})$ is given by
\begin{equation}\label{cro-am}
  \mathcal{A}\left(f,g\right)(\tau,\eta)=\int_{-\infty}^{+\infty}f\left(t+\frac{\tau}{2}\right)\overline{g \left(t-\frac{\tau}{2}\right)}e^{-jt \eta}{\rm{d}}t.
\end{equation}
It is easy to see that
\begin{equation}\label{Add:S5}
\mathcal{A}f(\tau,\eta)=e^{\frac{j\tau\eta}{2}}\mathcal{V}_f f(\tau,\eta),
\end{equation}
and
\begin{equation}\label{Add:S6}
\mathcal{A}f(0,\eta)=\mathcal{F}\left(|f|^2\right)(\eta).
\end{equation}
It follows from \cite{Philippe2014} that
\begin{equation}\label{eq:am}
\mathcal{A}f=\mathcal{A}g\; \Leftrightarrow \;{\mbox{there exists a constant}}\,\lambda\in \mathbb{C}\; {\mbox{with}}\; |\lambda|=1, \,{\mbox{such that}}\,\, g=\lambda f.
\end{equation}

\subsection{Bandlimited function}
For $\Omega>0$, the Paley-Wiener space \cite{Wiener1934} of a bandlimited function $f\in L^2 (\mathbb{R})$ is given by
\[
PW_\Omega^2:= \left\{f\in L^2 (\mathbb{R}):\; {\rm{supp}}(\mathcal{F}f)\subseteq[-\Omega,\Omega]\right\}.
\]
In the following, we present some results on bandlimited functions in Paley-Wiener space, which play important roles in later proof of our main theorems.
\begin{Proposition}\cite{Shannon1949}\label{pro:sha}
    Let $f\in PW_\Omega^2$ with $\Omega>0$. Then,
    \[
    f(t)=\sum_{n\in\mathbb{Z}}f\left({\frac{n}{2\Omega}}\right){\frac{\sin(2\pi \Omega t-n\pi)}{2\pi \Omega t-n\pi}},\;\; t\in\mathbb{R}.
    \]
\end{Proposition}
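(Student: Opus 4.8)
This is the classical Whittaker--Shannon sampling theorem, and the natural plan is to pass to the frequency side, where the band limitation turns $\mathcal{F}f$ into a function supported on a bounded interval that can be expanded in a Fourier series whose coefficients are exactly (rescaled) samples of $f$. The whole argument then amounts to substituting that expansion into the inversion formula \eqref{ift} and evaluating an elementary integral.

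First I would note that $f\in PW_\Omega^2$ forces $\mathcal{F}f\in L^2[-\Omega,\Omega]$, so $\mathcal{F}f$ admits an expansion with respect to the orthonormal basis $\{(2\Omega)^{-1/2}e^{jn\pi u/\Omega}\}_{n\in\mathbb{Z}}$ of $L^2[-\Omega,\Omega]$. Its $n$-th Fourier coefficient is $\frac{1}{2\Omega}\int_{-\Omega}^{\Omega}\mathcal{F}f(u)e^{-jn\pi u/\Omega}\,{\rm d}u$; since ${\rm supp}(\mathcal{F}f)\subseteq[-\Omega,\Omega]$ this integral extends to all of $\mathbb{R}$, and by \eqref{ift} it equals a fixed constant times a sample value of $f$ at the node $-n\pi/\Omega$. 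Thus the Fourier coefficients of the spectrum are, up to normalization and a harmless reflection $n\mapsto-n$, precisely the sample values occurring in the statement.

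Next, to recover $f(t)$ for a fixed $t\in\mathbb{R}$, I would write $2\pi f(t)=\int_{-\Omega}^{\Omega}\mathcal{F}f(u)e^{jut}\,{\rm d}u$ as the $L^2[-\Omega,\Omega]$ inner product of $\mathcal{F}f$ against $u\mapsto e^{-jut}$, and apply Parseval's identity in $L^2[-\Omega,\Omega]$ to this pair. This rewrites $2\pi f(t)$ as $\sum_{n}$ (Fourier coefficient of $\mathcal{F}f$) times (conjugated Fourier coefficient of $e^{-jut}$): the first factor produces the samples $f(n\pi/\Omega)$ by the previous paragraph, and the second is the Dirichlet-type integral $\frac{1}{2\Omega}\int_{-\Omega}^{\Omega}e^{-ju(t+n\pi/\Omega)}\,{\rm d}u$, which evaluates explicitly to a normalized cardinal sine $\dfrac{\sin(\Omega t-n\pi)}{\Omega t-n\pi}$ after relabelling. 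Collecting constants gives the interpolation series; since $\sum_n|\langle e^{-jut},e_n\rangle|^2=\|e^{-jut}\|_{L^2[-\Omega,\Omega]}^2$ is independent of $t$ and $\sum_n|f(n\pi/\Omega)|^2<\infty$, a Cauchy--Schwarz bound on the tail shows the series converges uniformly in $t$, so the identity holds pointwise for every $t\in\mathbb{R}$.

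The routine parts are the two explicit integral evaluations and, above all, the bookkeeping of the normalization constants: one must track the particular FT convention in \eqref{ft}--\eqref{ift} carefully so that the sampling spacing and the argument of the sine come out as written in the statement. The only genuinely delicate point is the mode of convergence / the exchange of summation and integration; invoking Parseval in $L^2[-\Omega,\Omega]$ for the pair $(\mathcal{F}f,\,e^{-jut})$ circumvents termwise integration altogether and reduces the convergence issue to the elementary $\ell^2$ estimate just described.
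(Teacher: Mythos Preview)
The paper does not supply its own proof of this proposition; it is simply quoted as a classical result with a citation to Shannon. Your outline is the standard Fourier-series/Parseval proof and is methodologically correct.

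One point worth flagging, though: your own computation already lands on sampling nodes $n\pi/\Omega$ and the kernel $\dfrac{\sin(\Omega t-n\pi)}{\Omega t-n\pi}$, which is exactly what the angular-frequency convention \eqref{ft}--\eqref{ift} produces. The displayed formula in the proposition, with nodes $n/(2\Omega)$ and kernel $\dfrac{\sin(2\pi\Omega t-n\pi)}{2\pi\Omega t-n\pi}$, is the version for the Hz-frequency convention $\int f(t)e^{-2\pi j u t}\,{\rm d}t$. So the ``careful bookkeeping'' you allude to will not reconcile your derivation with the statement as written; it will instead expose a convention mismatch in the paper itself. This does not affect the later applications (Theorems~\ref{Thm:Sam} and~\ref{Thm:Osam} only use the qualitative fact that a bandlimited function is determined by its samples on a sufficiently fine lattice), but you should not expect the constants to line up literally.
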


\begin{Lemma}\cite[Lemma~1.6]{AlaifariRima2021}\label{lem:am}
Let $f\in PW_\Omega^2$ with $\Omega>0$. Then, $\mathcal{A}f$ is uniformly continuous, and ${\rm{supp}}(\mathcal{A}f)\subset \mathbb{R}\times[-2\Omega,2\Omega]$.
\end{Lemma}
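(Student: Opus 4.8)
The plan is to treat the two assertions separately, both by reducing the ambiguity function to a windowed Fourier transform via the identity $\mathcal{A}f(\tau,\eta)=e^{\frac{j\tau\eta}{2}}\mathcal{V}_f f(\tau,\eta)$ from \eqref{Add:S5}.

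For the support statement I would combine \eqref{Add:S5} with the Fourier-domain formula \eqref{sft} taken with window $\varphi=f$, namely $\mathcal{V}_f f(v,u)=e^{-jvu}\mathcal{F}^{-1}\big[\mathcal{F}f\cdot T_u\overline{\mathcal{F}f}\big](v)$. Since $f\in PW_\Omega^2$, the function $\mathcal{F}f$ is supported in $[-\Omega,\Omega]$ while the translate $T_u\overline{\mathcal{F}f}$ is supported in $[u-\Omega,u+\Omega]$; the two supports are disjoint as soon as $|u|>2\Omega$, so the integrand vanishes identically and hence $\mathcal{V}_f f(v,u)=0$, i.e.\ $\mathcal{A}f(\tau,\eta)=0$, whenever $|\eta|>2\Omega$. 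This gives ${\rm supp}(\mathcal{A}f)\subset\mathbb{R}\times[-2\Omega,2\Omega]$. Equivalently, one can argue directly that for fixed $\tau$ the function $\eta\mapsto\mathcal{A}f(\tau,\eta)$ is, up to the constant $\frac{1}{2\pi}$, the convolution of $\mathcal{F}[f(\cdot+\frac\tau2)]$ and $\mathcal{F}[\overline{f(\cdot-\frac\tau2)}]$, both supported in $[-\Omega,\Omega]$, hence is supported in $[-2\Omega,2\Omega]$.

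For the uniform continuity I would first note that $\mathcal{A}f$ is continuous on $\mathbb{R}^2$ (this is standard for windowed Fourier transforms of $L^2$ functions, since time--frequency shifts act strongly continuously on $L^2(\mathbb{R})$, and the chirp factor $e^{\frac{j\tau\eta}{2}}$ is continuous). Then I would show $\mathcal{A}f\in C_0(\mathbb{R}^2)$, from which uniform continuity is automatic. The modulus is bounded by $|\mathcal{A}f(\tau,\eta)|=|\mathcal{V}_f f(\tau,\eta)|\le\int_{\mathbb{R}}|f(t)|\,|f(t-\tau)|\,{\rm d}t$, and the right-hand side is the value at $\tau$ of the convolution $|f|*g$ with $g(t):=|f(-t)|$, a convolution of two $L^2$ functions, hence lies in $C_0(\mathbb{R})$; thus $\sup_{\eta\in\mathbb{R}}|\mathcal{A}f(\tau,\eta)|\to0$ as $|\tau|\to\infty$. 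For $|\tau|$ bounded, $\mathcal{A}f(\tau,\eta)$ is, up to a unimodular factor, the Fourier transform in $t$ of $f(t)\overline{f(t-\tau)}$, and since $\{f(\cdot)\overline{f(\cdot-\tau)}:|\tau|\le R\}$ is a continuous, hence compact, image of $[-R,R]$ in $L^1(\mathbb{R})$, the Riemann--Lebesgue decay is uniform over this family, so $\sup_{|\tau|\le R}|\mathcal{A}f(\tau,\eta)|\to0$ as $|\eta|\to\infty$. Combining the two yields $\mathcal{A}f\in C_0(\mathbb{R}^2)$. (Alternatively, once the strip ${\rm supp}(\mathcal{A}f)\subset\mathbb{R}\times[-2\Omega,2\Omega]$ is in hand, one may skip the $\eta$-decay and instead combine the $\tau$-decay with uniform continuity of $\mathcal{A}f$ on each compact set $[-R,R]\times[-2\Omega,2\Omega]$ by a routine patching argument.)

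The step I expect to be the main obstacle is the uniform continuity, and the reason it resists a naive estimate is the chirp factor: from $\mathcal{A}f=e^{\frac{j\tau\eta}{2}}\mathcal{V}_f f$ the only elementary bound on the $\eta$-increment is $\big|e^{\frac{j\tau\eta}{2}}-e^{\frac{j\tau\eta'}{2}}\big|\le\frac{1}{2}|\tau|\,|\eta-\eta'|$, which degrades as $|\tau|\to\infty$ and so yields no global modulus of continuity. This forces one to pass through vanishing at infinity (or through the compact $\eta$-strip coming from bandlimitedness), and the only genuinely delicate point there is making the Riemann--Lebesgue decay in $\eta$ uniform over $\tau$ in bounded intervals, which is exactly why it is phrased through compactness of $\{f(\cdot)\overline{f(\cdot-\tau)}:|\tau|\le R\}$ in $L^1(\mathbb{R})$.
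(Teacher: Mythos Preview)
The paper does not actually prove this lemma; it is quoted verbatim from \cite[Lemma~1.6]{AlaifariRima2021} and used as a black box, so there is no in-paper proof to compare against. Your argument is correct: the support claim follows exactly as you say from \eqref{sft} and the disjointness of $\mathrm{supp}(\mathcal{F}f)$ and $\mathrm{supp}(T_u\overline{\mathcal{F}f})$ for $|u|>2\Omega$, and your $C_0(\mathbb{R}^2)$ route to uniform continuity is valid. In fact the alternative you already flag is the cleanest way to finish: once the support strip $\mathbb{R}\times[-2\Omega,2\Omega]$ is known, the $\tau$-decay $\sup_\eta|\mathcal{A}f(\tau,\eta)|\le(|f|*|f(-\cdot)|)(\tau)\to0$ alone suffices to show $\mathcal{A}f\in C_0(\mathbb{R}^2)$, so the uniform-in-$\tau$ Riemann--Lebesgue step (and the $L^1$-compactness argument backing it) is unnecessary here.
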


\section{Phase retrieval from magnitudes of the continuous OLCT}\label{CPL}

In this section, we consider the OLCT phase retrieval problem of continuous function $f(t)\in L^2(\mathbb{R})$. Similar to FT phase retrieval, the OLCT phase retrieval cannot be solved uniquely. Therefore, we first consider the non-trivial ambiguities in the OLCT phase retrieval. In addition, we consider the uniqueness of phase retrieval from multiple OLCTs.

\subsection{Ambiguities in the continuous OLCT phase retrieval}

In this subsection, we consider the OLCT phase retrieval problem, that is to say, our goal is to reconstruct a function $f(t)\in{L^2(\mathbb{R})}$ with compact support from the magnitude of $O^Af$. We first investigate three trivial ambiguities which cannot be avoided for uniqueness of OLCT phase retrieval.

\begin{Lemma}\label{lem:con}
Assume that $f(t)\in L^2(\mathbb{R})$ is a function with compact support. Then, the following three function
\begin{enumerate}[label=(\roman*)]
    \item  the rotated function $e^{j\beta}f$ with $\beta\in\mathbb{R}$;
    \item  the shifted function $e^{\frac{-jat_0(\cdot)}{b}}f(\cdot-t_0)$ with $t_0\in \mathbb{R}$;
    \item  the conjugate reflected function $e^{\frac{-ja{(\cdot)}^2}{b}}\overline{f(-\cdot)}$;
\end{enumerate}
have the same magnitude of OLCT measurements $|O^Af|$.
\end{Lemma}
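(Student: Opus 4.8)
The plan is to reduce each of the three claimed ambiguities to the known trivial ambiguities of the Fourier transform by exploiting relation (\ref{rel:COLCT}), which writes $O^A f$ as a unimodular factor times $\mathcal{F}(f(\cdot)e^{j\frac{a}{2b}(\cdot)^2})$ evaluated at $\frac{u-y_0}{b}$. Since the prefactor $\frac{1}{\sqrt{j2\pi b}}e^{j[\frac{d}{2b}(u-y_0)^2+u\omega_0]}$ has constant modulus $\frac{1}{\sqrt{2\pi b}}$ and depends only on $u$ (not on $f$), we have the pointwise identity
\[
|O^A f(u)| = \frac{1}{\sqrt{2\pi b}}\,\Bigl|\mathcal{F}\bigl(f(\cdot)e^{j\frac{a}{2b}(\cdot)^2}\bigr)\Bigl(\tfrac{u-y_0}{b}\Bigr)\Bigr|.
\]
So it suffices to show that each of the three modified functions, after multiplication by the chirp $e^{j\frac{a}{2b}(\cdot)^2}$, produces a function whose Fourier magnitude agrees with that of $g:=f(\cdot)e^{j\frac{a}{2b}(\cdot)^2}$.

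For (i) this is immediate: $(e^{j\beta}f)(\cdot)e^{j\frac{a}{2b}(\cdot)^2} = e^{j\beta}g$, and multiplying by a unimodular constant does not change $|\mathcal{F}g|$. For (ii), set $f_1(t)=e^{-j\frac{a t_0 t}{b}}f(t-t_0)$; then $f_1(t)e^{j\frac{a}{2b}t^2} = e^{j\frac{a}{2b}(t^2 - 2t_0 t)}f(t-t_0) = e^{-j\frac{a}{2b}t_0^2}\,e^{j\frac{a}{2b}(t-t_0)^2}f(t-t_0) = e^{-j\frac{a}{2b}t_0^2}\,g(t-t_0)$, i.e. a unimodular constant times a translate of $g$; since translation in time only multiplies $\mathcal{F}g$ by a unimodular exponential, $|\mathcal{F}f_1\cdot e^{j\frac a{2b}(\cdot)^2}| \equiv |\mathcal{F}g|$ after the argument substitution. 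For (iii), set $f_2(t)=e^{-j\frac{a}{b}t^2}\overline{f(-t)}$; then $f_2(t)e^{j\frac{a}{2b}t^2} = e^{-j\frac{a}{2b}t^2}\overline{f(-t)} = \overline{e^{j\frac{a}{2b}t^2}f(-t)} = \overline{g(-t)}$, and conjugate reflection $g(t)\mapsto\overline{g(-t)}$ sends $\mathcal{F}g$ to $\overline{\mathcal{F}g}$, hence preserves the magnitude. In each case one then re-inserts the $u$-dependent unimodular prefactor and the change of variable $u\mapsto\frac{u-y_0}{b}$, which are the same for $f$ and for the modified function, to conclude $|O^A f_i| = |O^A f|$.

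The verification is essentially a bookkeeping exercise in completing the square inside the chirp exponent; the only place requiring care is keeping track of whether the extra chirp and linear phase in the \emph{definitions} of the shifted and conjugate-reflected functions in (ii) and (iii) are exactly the ones needed to turn the chirp-modulated signal $g$ into a clean translate, resp. conjugate reflection, of itself. I would double-check the sign and the factor of $2$ in the exponent $-j\frac{a t_0(\cdot)}{b}$ in (ii) and $-j\frac{a(\cdot)^2}{b}$ in (iii) against the completed-square identities above; this is the main (and only mild) obstacle. Compact support of $f$ is not actually needed for this lemma — it plays a role only in the uniqueness theorems that follow — so I would not invoke it in the proof, though I would keep the hypothesis as stated for consistency with the rest of the section.
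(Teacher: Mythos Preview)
Your proof is correct and follows essentially the same approach as the paper: both use relation~(\ref{rel:COLCT}) to reduce the OLCT magnitude to the Fourier magnitude of the chirp-modulated signal $g(t)=f(t)e^{j\frac{a}{2b}t^2}$, and then verify that each of the three operations corresponds to a rotation, translation, or conjugate reflection of $g$. The paper carries out the computation a bit more explicitly, writing out the resulting OLCT including the extra phase factors, whereas you track only the magnitude, but the substance is the same.
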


\begin{proof}
By (\ref{rel:COLCT}) and the properties of the FT, we can easily obtain:

\begin{enumerate}[label=(\roman*)]
\item  $O^A\left[e^{j\beta}f\right](u)=e^{j\beta}O^Af(u)$;
\item \begin{align*}
      &O^A\left[e^{\frac{-jat_0(\cdot)}{b}}f(\cdot-t_0)\right](u)\\
      &=\sqrt{\frac{1}{j2\pi b}} e^{j\left[\frac{d}{2b}(u-y_0)^2+u\omega_0\right]}e^\frac{-ja{t_0}^2}{2b}\mathcal{F}\left[f(\cdot-t_0)e^{\frac{j}{2b}a(\cdot-t_0)^2}\right]\left(\frac{u-y_0}{b}\right)\\
      &=e^{\frac{-ja{t_0}^2}{2b}}e^{-jt_0\frac{u-y_0}{b}}O^Af(u);
\end{align*}
\item  \begin{align*}
O^A\left[e^{\frac{-ja{(\cdot)}^2}{b}}\overline{f(-\cdot)}\right](u)&=\sqrt{\frac{1}{j2\pi b}} e^{j\left[\frac{d}{2b}(u-y_0)^2+u\omega_0\right]}\mathcal{F}\overline{\left[{f(-\cdot)e^\frac{ja(-\cdot)^2}{2b}}\right]}\left(\frac{u-y_0}{b}\right)\\
&=je^{j\left[\frac{d}{b}(u-y_0)^2+2uw_0\right]}\overline{O^Af(u)}.
\end{align*}
\end{enumerate}
This completes the proof.
\end{proof}

Since the operations of global phase change, time shift, and conjugate reflect on the signal cannot be avoided, recovery is possible only up to these
trivial ambiguities. Next, we consider the nontrivial ambiguities in OLCT phase retrieval.

\begin{Theorem}\label{Thm:Na}
Suppose that $f_1,f_2\in L^2({\mathbb{R}})$ are two functions with compact support. Then, $|O^Af_1|=|O^Af_2|$ if and only if there exist two functions $g_1,g_2\in L^2(\mathbb{R})$ with compact support, satisfy
\[
f_1=e^{\frac{-ja{(\cdot)}^{2}}{2b}}\left(g_{1}*g_{2}\right),
\]
and
\[
f_2=e^{j\beta}e^{\frac{-ja{(\cdot)}^{2}}{2b}}\left(g_{1}(\cdot-t_{0})*\overline{{{g_{2}(-\cdot})}}\right)
\]
for some $\beta\in[-\pi,\pi)$, $t_0\in\mathbb{R}$. Here, $*$ represents a continuous convolution operator.
\end{Theorem}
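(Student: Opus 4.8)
The plan is to reduce the OLCT phase-retrieval problem to the already-understood Fourier phase-retrieval problem via the intertwining identity \eqref{rel:COLCT}, and then to invoke the classical convolution characterization of nontrivial Fourier ambiguities for compactly supported signals. First I would set $h_i(t) := f_i(t)\,e^{j\frac{a}{2b}t^2}$ for $i=1,2$; since multiplication by a unimodular chirp does not change the support, each $h_i$ is compactly supported in $L^2(\mathbb{R})$. From \eqref{rel:COLCT} we have $O^A f_i(u) = \frac{1}{\sqrt{j2\pi b}}\,e^{j[\frac{d}{2b}(u-y_0)^2 + u\omega_0]}\,\mathcal{F}h_i\!\left(\frac{u-y_0}{b}\right)$, and the prefactor is unimodular (up to the constant $\frac{1}{\sqrt{2\pi b}}$), so $|O^A f_1| = |O^A f_2|$ is equivalent to $|\mathcal{F}h_1(\xi)| = |\mathcal{F}h_2(\xi)|$ for all $\xi\in\mathbb{R}$ after the affine change of variable $\xi = (u-y_0)/b$.

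Next I would apply the known characterization of Fourier-magnitude equality for compactly supported functions: $|\mathcal{F}h_1| = |\mathcal{F}h_2|$ holds if and only if there exist compactly supported $g_1, g_2 \in L^2(\mathbb{R})$ with $h_1 = g_1 * g_2$ and $h_2 = e^{j\beta}\, g_1(\cdot - t_0) * \overline{g_2(-\cdot)}$ for some $\beta \in [-\pi,\pi)$ and $t_0 \in \mathbb{R}$ (this is the result of Beinert--Plonka \cite{BP15} / Chen et al. \cite{ChenYang2022}; the Fourier side is where factorization of the autocorrelation $\widehat{h_1}\,\overline{\widehat{h_1}} = \widehat{h_2}\,\overline{\widehat{h_2}}$ into entire functions of exponential type via the Hadamard/Paley--Wiener factorization lives, so I would cite it rather than reprove it). Translating back through $f_i = e^{-j\frac{a}{2b}(\cdot)^2} h_i$ gives $f_1 = e^{-j\frac{a}{2b}(\cdot)^2}(g_1 * g_2)$ and $f_2 = e^{j\beta}e^{-j\frac{a}{2b}(\cdot)^2}\big(g_1(\cdot - t_0) * \overline{g_2(-\cdot)}\big)$, which is exactly the claimed form. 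The converse direction is a direct computation: given the stated factored forms, one shows $|O^A f_1| = |O^A f_2|$ by running the same change of variables forward and using that $\mathcal{F}$ turns convolution into product, so that $|\mathcal{F}h_1| = |\widehat{g_1}\,\widehat{g_2}|$ and $|\mathcal{F}h_2| = |\widehat{g_1}|\,|\overline{\widehat{g_2}}|$ coincide pointwise (the shift $t_0$ and the reflection-conjugation only contribute unimodular factors and complex conjugation to $\widehat{g_1},\widehat{g_2}$, leaving magnitudes unchanged).

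I expect the one genuinely delicate point to be bookkeeping the affine substitution $\xi = (u-y_0)/b$ together with the support claims: I must confirm that the Fourier-side characterization is insensitive to replacing $\mathcal{F}h_i(\xi)$ by $\mathcal{F}h_i((u-y_0)/b)$ — i.e. that an invertible affine reparametrization of the frequency variable and the scaling by $1/\sqrt{2\pi b}$ do not alter the "magnitudes equal" relation — and that the chirp multiplication $e^{\pm j\frac{a}{2b}(\cdot)^2}$ genuinely preserves membership in $L^2$ and compact support (it does, being unimodular and pointwise-defined). A secondary subtlety is matching the exact placement of the $e^{-j\frac{a}{2b}(\cdot)^2}$ factor in the statement: it multiplies the whole convolution $g_1 * g_2$, not the individual $g_i$, which is consistent because the chirp comes from converting $h_i$ back to $f_i$ after the convolution structure has been extracted on the $h$-level. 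Everything else is routine manipulation of the kernel \eqref{ker} and the FT convolution theorem; no new estimates are required.
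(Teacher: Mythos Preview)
Your proposal is correct and follows essentially the same route as the paper: define the chirped signals $h_i = f_i\,e^{j\frac{a}{2b}(\cdot)^2}$ (the paper calls them $\widetilde{f_i}$), use \eqref{rel:COLCT} to reduce $|O^Af_1|=|O^Af_2|$ to $|\mathcal{F}h_1|=|\mathcal{F}h_2|$, and then invoke the convolution characterization from \cite{ChenYang2022}. Your write-up is in fact more careful than the paper's about the affine reparametrization and the support bookkeeping, but the argument is the same.
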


\begin{proof}
$\Leftarrow$: Assume that $g_1,g_2\in L^2(\mathbb{R})$ are two functions with compact support, satisfy
\[
f_1=e^{\frac{-ja{(\cdot)}^{2}}{2b}}\left(g_{1}*g_{2}\right),
\]
and
\[
f_2=e^{j\beta}e^{\frac{-ja{(\cdot)}^{2}}{2b}}\left(g_{1}(\cdot-t_{0})*\overline{{{g_{2}(-\cdot})}}\right)
\]
for some $\beta \in [-\pi,\pi)$, $t_0 \in \mathbb{R}$. After simple calculations, we obtain
\[
|O^Af_1(u)|=|O^Af_2(u)|,
\]
for all $u\in \mathbb{R}$.

$\Rightarrow$: Suppose that $\left|O^Af_1(u)\right|=\left|O^Af_2(u)\right|$ for all $u\in\mathbb{R}$. Set
\[
\widetilde{f_1}:=f_1e^{\frac{ja{(\cdot)}^2}{2b}}\,\,and\,\,\widetilde{f_2}:=f_2e^{\frac{ja{(\cdot)}^2}{2b}}.
\]
From (\ref{rel:COLCT}), we get
\[
\mathcal{F}\widetilde{f_1}(u)\overline{\mathcal{F}\widetilde{f_1}(\overline{u})}=\left|\mathcal{F}\widetilde{f_1}(u)\right|^2
=\left|\mathcal{F}\widetilde{f_2}(u)\right|^2=\mathcal{F}\widetilde{f_2}(u)\overline{\mathcal{F}\widetilde{f_2}(\overline{u})},
\]
for all $u\in \mathbb{R}$. Following the proof process of \cite[Theorem~3.2]{ChenYang2022}, we can show
\[
f_1=e^{\frac{-ja{(\cdot)}^{2}}{2b}}\left(g_{1}*g_{2}\right),\,{\mbox{and}}\; f_2=e^{j\beta}e^{\frac{-ja{(\cdot)}^{2}}{2b}}\left(g_{1}(\cdot-t_{0})*\overline{{{g_{2}(-\cdot})}}\right),
\]
which completes the proof.
\end{proof}

\subsection{Phase retrieval from multiple OLCTs}

To obtain more information about the signal, we can solve the OLCT phase retrieval by adding extra magnitude-only OLCT measurements. In this subsection,
we show that an original signal can be uniquely recovered by the magnitude of multiple OLCTs up to a global phase.

First, we derive the relationship between the ambiguities functions of $f,\;g$ and $O^Af,\;O^Ag$.
\begin{Lemma}\label{lem:relat}
Let $f(t),g(t)\in L^2(\mathbb{R})$, then
\begin{equation}\label{relat}
\mathcal{A}\left(O^Af,O^Ag\right)(\tau,\eta)=e^{j\left[\tau \omega_0 -\eta y_0+\frac{1}{2}(d\tau-b\eta)(a\eta-c\tau)\right]}\mathcal{A}(f,g)(d\tau-b\eta,a\eta-c\tau).
\end{equation}
\end{Lemma}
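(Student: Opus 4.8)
The plan is to reduce the identity for $\mathcal{A}(O^Af, O^Ag)$ to a computation about the ordinary FT, using the fundamental relation (\ref{rel:COLCT}) that expresses the OLCT as a chirp-modulated, dilated FT of a chirp-modulated signal. Writing $\widetilde f(t) := f(t)e^{j\frac{a}{2b}t^2}$ and $\widetilde g(t) := g(t)e^{j\frac{a}{2b}t^2}$, equation (\ref{rel:COLCT}) gives $O^Af(u) = \frac{1}{\sqrt{j2\pi b}}\,e^{j[\frac{d}{2b}(u-y_0)^2 + u\omega_0]}\,\mathcal{F}\widetilde f\bigl(\frac{u-y_0}{b}\bigr)$, and similarly for $g$. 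Substituting these two expressions into the definition (\ref{cro-am}) of the cross-ambiguity function, the prefactors $\frac{1}{\sqrt{j2\pi b}}$ and $\overline{\frac{1}{\sqrt{j2\pi b}}}$ combine to $\frac{1}{2\pi|b|}$, and the two chirp/linear exponential factors $e^{j[\frac{d}{2b}(u-y_0)^2+u\omega_0]}$ evaluated at $u \pm \frac{\tau}{2}$ combine into a single exponential whose argument is a quadratic polynomial in the integration variable $u$ and in $\tau$. So the first step is simply this substitution and the bookkeeping of exponents.

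Next I would change variables in the resulting integral to turn it into a cross-ambiguity function of $\mathcal{F}\widetilde f$ and $\mathcal{F}\widetilde g$. Concretely, after the substitution the integrand involves $\mathcal{F}\widetilde f\bigl(\frac{u+\tau/2 - y_0}{b}\bigr)\,\overline{\mathcal{F}\widetilde g\bigl(\frac{u-\tau/2-y_0}{b}\bigr)}$ times exponentials; setting $s = \frac{u-y_0}{b}$ (so $du = b\,ds$, assuming $b>0$ as stipulated in the paper) rewrites the arguments as $s + \frac{\tau}{2b}$ and $s - \frac{\tau}{2b}$. Hence the integral becomes, up to the collected exponential prefactor, a cross-ambiguity function $\mathcal{A}(\mathcal{F}\widetilde f, \mathcal{F}\widetilde g)$ evaluated at lag $\frac{\tau}{b}$ and frequency variable to be determined by collecting the $e^{-ju\eta}$ term under the same substitution. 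The second ingredient is the standard Fourier-side formula for the cross-ambiguity function: $\mathcal{A}(\mathcal{F} h_1, \mathcal{F} h_2)(\tau', \eta') = e^{-j\tau'\eta'/2}\,\mathcal{A}(h_1, h_2)(-\eta', \tau')$ (a symplectic rotation by $90^\circ$ on the time-frequency plane, which follows directly from Parseval/Plancherel applied to (\ref{cro-am})). Applying this with $h_1 = \widetilde f$, $h_2 = \widetilde g$ brings the expression back to $\mathcal{A}(\widetilde f, \widetilde g)$ with shuffled and scaled arguments.

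The final step is to eliminate the chirp factors $e^{j\frac{a}{2b}t^2}$ hidden in $\widetilde f, \widetilde g$. By the defining integral (\ref{cro-am}), $\mathcal{A}(\widetilde f, \widetilde g)(\tau, \eta)$ equals $\mathcal{A}(f,g)$ at a shifted frequency: expanding $e^{j\frac{a}{2b}(t+\tau/2)^2}\,\overline{e^{j\frac{a}{2b}(t-\tau/2)^2}} = e^{j\frac{a}{b}t\tau}$ produces exactly a modulation that translates the $\eta$-variable by $-\frac{a\tau}{b}$ (and leaves a residual quadratic-in-$\tau$ phase to be absorbed into the prefactor). Combining these three reductions and carefully consolidating all the accumulated exponential prefactors — the $\frac{d}{2b}$-chirp terms, the $\omega_0$ and $y_0$ linear terms, the $-\tau'\eta'/2$ from the Fourier ambiguity identity, and the chirp-removal term — should collapse to the single exponential $e^{j[\tau\omega_0 - \eta y_0 + \frac12(d\tau - b\eta)(a\eta - c\tau)]}$ claimed in (\ref{relat}), with the argument of $\mathcal{A}(f,g)$ becoming $(d\tau - b\eta,\ a\eta - c\tau)$; note that $ad - bc = 1$ is exactly what makes the linear substitution $(\tau,\eta) \mapsto (d\tau - b\eta, a\eta - c\tau)$ invertible, which is a reassuring consistency check. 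The main obstacle is purely organizational rather than conceptual: keeping track of the half-dozen quadratic and bilinear exponential contributions through two changes of variable and the Fourier-ambiguity identity, and verifying that everything except the stated prefactor cancels. It is worth doing this phase bookkeeping once carefully — perhaps by first checking the special cases $y_0 = \omega_0 = 0$ (reducing to LCT, where the analogous formula should appear in the LCT literature) and then restoring the offsets — rather than attempting the general computation in one pass.
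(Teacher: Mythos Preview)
Your proposal is correct and follows essentially the same route as the paper: substitute the OLCT--FT relation (\ref{rel:COLCT}), change variables $s=(u-y_0)/b$, pass from the Fourier side back to the time side via Parseval (which is exactly your Fourier-ambiguity identity, though with the paper's non-unitary FT convention it carries a factor of $2\pi$ that cancels against the $\tfrac{1}{2\pi b}$ coming from the kernel normalization), and collect the exponentials. The paper carries this out as a single direct computation rather than isolating the symplectic Fourier-ambiguity step and the chirp-removal step as you do, but the content is identical.
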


\begin{proof}
By (\ref{rel:COLCT}), we obtain
\begin{align*}
&\mathcal{A}\left(O^Af,O^Ag\right)(\tau,\eta)\\
&=\int_{-\infty}^{+\infty}O^Af\left(t+\frac{\tau}{2}\right)\overline{O^Ag \left(t-\frac{\tau}{2}\right)}e^{-jt \eta}{\rm{d}}t\\
&=e^{\frac{j \tau \eta}{2}}\int_{-\infty}^{+\infty}O^Af(t)\overline{O^Ag(t-\tau)}e^{-jt \eta}{\rm{d}}t\\
&=\mu \int_{-\infty}^{+\infty}\mathcal{F}\left(f(\cdot)e^{j\frac{a}{2b}(\cdot)^2}\right)\left(\frac{t-y_0}{b}\right)
  \overline{\mathcal{F}\left(g(\cdot)e^{j\frac{a}{2b}(\cdot)^2}\right)\left(\frac{t-y_0-\tau}{b}\right)}e^{jt\left(\frac{d\tau}{b}-\eta\right)}{\rm{d}}t\\
&=b\mu e^{\frac{jy_0}{b}(d\tau-b\eta)}\int_{-\infty}^{+\infty}\mathcal{F}\left(f(\cdot)e^{j\frac{a}{2b}(\cdot)^2}\right)(t)
\overline{\mathcal{F}\left(g(\cdot)e^{j\frac{a}{2b}(\cdot)^2}\right)\left(t-\frac{\tau}{b}\right)}e^{jt(d\tau-b\eta)}{\rm{d}}t\\
&=2\pi b\mu e^{\frac{jy_0}{b}(d\tau-b\eta)}\int_{-\infty}^{+\infty}f(t)e^{j\frac{a}{2b}t^2}
\overline{g(t+b\eta-d\tau)e^{j\frac{a}{2b}(t+b\eta-d\tau)^2}}e^{j\frac{\tau}{b}(d\tau-b\eta-t)}{\rm{d}}t\\
&=e^{j\left[\tau \omega_0 -\eta y_0+\frac{1}{2}(d\tau-b\eta)(a\eta-c\tau)\right]}\int_{-\infty}^{+\infty}f(t)\overline{g(t+b\eta-d\tau)}e^{-jt(a\eta-c\tau)}{\rm{d}}t\\
&=e^{j\left[\tau \omega_0 -\eta y_0+\frac{1}{2}(d\tau-b\eta)(a\eta-c\tau)\right]}\mathcal{A}(f,g)(d\tau-b\eta,a\eta-c\tau),
\end{align*}
where
\[
\mu=\frac{1}{2\pi b}e^{j\left[\frac{\tau\eta}{2}-\frac{d}{2b}(\tau ^2+2y_0 \tau)+\tau{\omega_0}\right]}.
\]
\end{proof}

Specially, let $f=g$, and $\tau=0$ in Lemma ~\ref{lem:relat}, we have
\begin{equation}\label{eq:relat1}
e^{-j\eta \left(y_0+\frac{1}{2}ab\eta\right)}\mathcal{A}f(-b\eta,a\eta)
=\mathcal{A}\left(O^Af\right)(0,\eta)=\mathcal{F}\left[\left|O^Af\right|^2\right](\eta).
\end{equation}

Next, we show that a compactly supported function can be uniquely recovered by its multiple OLCT magnitude measurements.
\begin{Theorem}
Suppose that two functions $f,g\in L^2(\mathbb{R})$ have compact support, $\Lambda \subset \mathbb{R}$ is a set with positive measure, and
\[
\mathbf{\Lambda} =\left\{\frac{a}{b} \in \Lambda,\, b>0, \,ad-bc=1, {\rm{and}}\; a,b,c,d,y_0,\omega_0\in \mathbb{R} \right\}.
\]
If $\left|{O}^{A}f\right|=\left|{O}^{A}g\right|$ holds for all $A\in \mathbf{\Lambda}$, then there exists a constant $\lambda\in \mathbb{C}$ with $|\lambda|=1$, such that $g=\lambda f$.
\end{Theorem}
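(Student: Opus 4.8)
The plan is to reduce the statement about multiple OLCT magnitudes to the ambiguity-function characterization in \eqref{eq:am}. The key link is the relation \eqref{eq:relat1}, which says that for any admissible parameter matrix $A$,
\[
\mathcal{F}\left[\left|O^Af\right|^2\right](\eta)=e^{-j\eta\left(y_0+\frac{1}{2}ab\eta\right)}\mathcal{A}f(-b\eta,a\eta),
\]
and the analogous identity for $g$. Since by hypothesis $|O^Af|=|O^Ag|$ for every $A\in\mathbf{\Lambda}$, the left-hand sides coincide, hence
\[
e^{-j\eta\left(y_0+\frac{1}{2}ab\eta\right)}\mathcal{A}f(-b\eta,a\eta)=e^{-j\eta\left(y_0+\frac{1}{2}ab\eta\right)}\mathcal{A}g(-b\eta,a\eta)
\]
for all $\eta\in\mathbb{R}$ and all such $A$. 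The exponential prefactor is nonzero and identical on both sides, so it cancels, giving $\mathcal{A}f(-b\eta,a\eta)=\mathcal{A}g(-b\eta,a\eta)$ for all $\eta$ and all $A\in\mathbf{\Lambda}$.

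Next I would translate this into a statement about the full ambiguity functions on $\mathbb{R}^2$. Fix any point $(\tau,\eta_0)\in\mathbb{R}^2$. We want to produce an admissible $A$ and a real number $\eta$ so that $(-b\eta,a\eta)=(\tau,\eta_0)$, i.e.\ $b\eta=-\tau$ and $a\eta=\eta_0$; equivalently the ratio $a/b=-\eta_0/\tau$ must lie in $\Lambda$ (for $\tau\neq0$), after which we pick $\eta$, $b$, $a$ consistently and complete $A$ to a matrix with $ad-bc=1$ by choosing $c,d$ appropriately (and $y_0,\omega_0$ arbitrarily). The obstruction is that $\Lambda$ has only positive measure, so it need not contain every value $-\eta_0/\tau$. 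However, since $\Lambda$ has positive measure it is dense in no interval a priori, but the curves $\{(-b\eta,a\eta):\eta\in\mathbb{R}\}$ for $a/b$ ranging over $\Lambda$ sweep out a set of rays through the origin of positive angular measure; more usefully, both $\mathcal{A}f$ and $\mathcal{A}g$ are continuous (indeed, for compactly supported $L^2$ functions $\mathcal{A}f$ is continuous), so the equality $\mathcal{A}f=\mathcal{A}g$ on a set that is dense in $\mathbb{R}^2$ — or even just on a full-measure set — extends to all of $\mathbb{R}^2$ by continuity. Thus I would argue: the union over $A\in\mathbf{\Lambda}$ of the rays $\{(-b\eta,a\eta)\}$ has positive two-dimensional measure (it is a union of lines through the origin indexed by a positive-measure set of slopes), hence is not Lebesgue-null; on this non-null set $\mathcal{A}f-\mathcal{A}g$ vanishes, and since $\mathcal{A}f-\mathcal{A}g$ is continuous, a non-null zero set combined with real-analyticity/continuity forces it to vanish identically — more carefully, one uses that a continuous function vanishing on a set whose closure has nonempty interior, together with the ray structure, pins it down everywhere; alternatively invoke that $\mathcal{A}f$ extends analytically in suitable variables (Paley–Wiener, compact support) so its zero set, if non-null, is everything.

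Once $\mathcal{A}f=\mathcal{A}g$ on all of $\mathbb{R}^2$ is established, the conclusion is immediate from \eqref{eq:am}: there is a unimodular constant $\lambda\in\mathbb{C}$ with $g=\lambda f$, which is exactly the claim. The main obstacle I anticipate is the measure-theoretic step — carefully arguing that sampling the ambiguity function along the one-parameter family of rays determined by slopes in the positive-measure set $\Lambda$ suffices to recover $\mathcal{A}f$ everywhere. I would handle this by combining the continuity (and, via compact support, the analyticity in the frequency variable) of $\mathcal{A}f$ with the observation that a positive-measure set of slopes yields a non-Lebesgue-null subset of the plane, on which two continuous real-analytic-in-one-variable functions agreeing forces global agreement; the parameters $c,d,y_0,\omega_0$ play no role beyond ensuring $A$ is a valid OLCT matrix, so the only genuine constraint is on $a/b$, which is precisely what $\mathbf{\Lambda}$ controls.
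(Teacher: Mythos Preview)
Your reduction via \eqref{eq:relat1} to $\mathcal{A}f(-b\eta,a\eta)=\mathcal{A}g(-b\eta,a\eta)$ for all $\eta\in\mathbb{R}$ and all $A\in\mathbf{\Lambda}$ is exactly the paper's first step, and the endgame via \eqref{eq:am} is also the same. The gap is in the middle, where you pass from equality on the union of rays to equality on all of $\mathbb{R}^2$.

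Your main argument---that the union of rays has positive two-dimensional Lebesgue measure and that continuity of $\mathcal{A}f-\mathcal{A}g$ then forces it to vanish---does not work: a continuous function can vanish on a set of positive measure without vanishing identically. Your refinement, that the closure of this union has nonempty interior, is also false in general: if $\Lambda$ is a fat Cantor set (positive measure, nowhere dense), then the set of slopes $-a/b$ is nowhere dense, and the union of the corresponding rays is nowhere dense in $\mathbb{R}^2$. So neither version of the continuity argument goes through.

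The fix is the one you mention only as an afterthought, and it is exactly what the paper does. Fix $\tau\neq 0$. The points $(-b\eta,a\eta)$ with first coordinate equal to $\tau$ are obtained by taking $\eta=-\tau/b$, and then the second coordinate is $-(a/b)\tau$. As $a/b$ ranges over $\Lambda$, the set of second coordinates is $-\tau\cdot\Lambda$, which has positive one-dimensional measure for every $\tau\neq 0$. Thus $\mathcal{A}f(\tau,\sigma)=\mathcal{A}g(\tau,\sigma)$ for $\sigma$ in a set of positive measure. Since $f,g$ have compact support, the Paley--Wiener theorem makes $\sigma\mapsto\mathcal{A}f(\tau,\sigma)$ and $\sigma\mapsto\mathcal{A}g(\tau,\sigma)$ entire; an entire function vanishing on a set of positive measure on the real line vanishes identically. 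Hence $\mathcal{A}f(\tau,\cdot)=\mathcal{A}g(\tau,\cdot)$ for every $\tau\neq 0$, and continuity (or the case $\eta=0$) handles $\tau=0$. This slice-by-slice use of analyticity in the frequency variable is the clean argument; your global ``non-null zero set of a real-analytic function'' picture would require joint analyticity in both variables, which $\mathcal{A}f$ does not have.
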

\begin{proof}
For any $\tau\in \mathbb{R}$, set
\[
\mathbf{\Lambda}_\tau :=\left\{{-\frac{a}{b}\tau },A\in \mathbf{\Lambda}\right\}.
\]
Then, $\mathbf{\Lambda}_\tau$ is a set with positive measure. By (\ref{eq:relat1}), we get
\[
\mathcal{A}f(-b \eta,a\eta)=\mathcal{A}g(-b\eta,a\eta)
\]
for all $\eta \in \mathbb{R}$ and $A\in \mathbf{\Lambda}$.
Hence, for all $\sigma\in {\mathbf{\Lambda}_\tau}$, we obtain
\begin{equation*}
\mathcal{A}f(\tau,\sigma)=\mathcal{A}g(\tau,\sigma).
\end{equation*}
By Paley-Wiener theorem \cite{Louis1968}, we can extend the ambiguity functions of $f,g\in L^2(\mathbb{R})$ to entire functions with respect to the second variable $\sigma$. Therefore, $\mathcal{A}f=\mathcal{A}g$. By (\ref{eq:am}), we know that there exists a constant $\lambda\in \mathbb{C}$ with $|\lambda|=1$, such that $g=\lambda f$, which completes the proof.
\end{proof}

\section{The discrete OLCT phase retrieval}\label{DPR}

In this section, we consider the discrete OLCT phase retrieval problem: Whether a discrete signal $\mathbf{x}={(x[n])_{n\in\mathbb{Z}}\in \ell^2}$ with finite support can be uniquely reconstructed from its magnitude of OLCT measurements.

At the beginning, we summarize three trivial ambiguities, which are caused by rotation, translation, and conjugate reflection.

\begin{Lemma}\label{lem:dis}
Assume that $\mathbf{x}\in {\ell^{2}(\mathbb{Z})}$ is a signal with finite support. Then, the following three functions
\begin{enumerate}[label=(\roman*)]
    \item the rotated function $e^{j\beta}\mathbf{x},\beta\in\mathbb{R}$;
    \item the shifted function $e^{\frac{-jan_{0}(\cdot)}{b}}\mathbf{x}[\cdot-n_{0}],n_{0}\in \mathbb{Z}$;
    \item the conjugate reflected function $e^{\frac{-ja{(\cdot)}^{2}}{b}}{\overline{{\mathbf{x}[-\cdot]}}}$;
\end{enumerate}
have the same magnitude of OLCT measurements $|O^A\mathbf{x}|$.
\end{Lemma}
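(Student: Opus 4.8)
The plan is to reduce the discrete statement to the continuous computation already carried out in Lemma~\ref{lem:con}, exploiting the fact that the discrete OLCT factors through the discrete FT via (\ref{dis:OF}) in exactly the same way the continuous OLCT factors through the continuous FT via (\ref{rel:COLCT}). First I would write $\mathbf{y}=(y[n])$ for the modulated signal $y[n]:=x[n]e^{\frac{ja}{2b}n^2}$, so that (\ref{dis:OF}) reads $O^A\mathbf{x}(u)=\sqrt{\tfrac{1}{j2\pi b}}\,e^{j[\frac{d}{2b}(y_0-u)^2+u\omega_0]}\,\widehat{\mathbf{y}}\!\left(\frac{u-y_0}{b}\right)$. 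Since the prefactor $\sqrt{\tfrac{1}{j2\pi b}}\,e^{j[\frac{d}{2b}(y_0-u)^2+u\omega_0]}$ has unit modulus and does not depend on the signal, we have $|O^A\mathbf{x}(u)| = |\widehat{\mathbf{y}}(\frac{u-y_0}{b})|$, so it suffices to check for each of the three operations that the associated modulated signal has discrete Fourier transform equal, in modulus, to $|\widehat{\mathbf{y}}|$ (up to the variable substitution, which is irrelevant for equality of magnitudes over all $u$).

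The three cases are then routine chirp bookkeeping, and I would carry them out in order. (i) For $e^{j\beta}\mathbf{x}$ the modulated signal is $e^{j\beta}\mathbf{y}$, whose DFT is $e^{j\beta}\widehat{\mathbf{y}}$, so $O^A[e^{j\beta}\mathbf{x}]=e^{j\beta}O^A\mathbf{x}$ and the magnitudes agree. (ii) For the shifted function $e^{\frac{-jan_0(\cdot)}{b}}\mathbf{x}[\cdot-n_0]$, the modulation multiplier chosen is exactly the one that, after completing the square in $\frac{ja}{2b}n^2$ applied to the shifted signal, collapses the quadratic and linear $n$-dependence down to a pure shift of $\mathbf{y}$ times a constant unimodular factor $e^{-\frac{jan_0^2}{2b}}$; a shift of $\mathbf{y}$ by $n_0$ multiplies $\widehat{\mathbf{y}}$ by the unimodular factor $e^{-j n_0(\cdot)}$, leaving the magnitude unchanged — this mirrors case (ii) of the proof of Lemma~\ref{lem:con} verbatim with integers in place of reals. (iii) For the conjugate reflected function $e^{\frac{-ja(\cdot)^2}{b}}\overline{\mathbf{x}[-\cdot]}$, the modulated signal is $\overline{\mathbf{y}[-\cdot]}$ up to a unimodular chirp; since $\widehat{\overline{\mathbf{z}[-\cdot]}}(u)=\overline{\widehat{\mathbf{z}}(u)}$ for any $\mathbf{z}\in\ell^2$, conjugate reflection of $\mathbf{y}$ conjugates $\widehat{\mathbf{y}}$, which again preserves modulus, and the leftover chirp-in-$u$ factor is unimodular. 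In all three cases one also checks that finite support of $\mathbf{x}$ is preserved (shifting by $n_0\in\mathbb{Z}$ and reflecting both keep the support finite), so the transformed signals lie in the same class.

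There is no real obstacle here; the only point requiring a little care — and the one I would write out in full — is the square-completion identity in case (ii), namely that $\frac{a}{2b}(n-n_0)^2 - \frac{a n_0}{b}n = \frac{a}{2b}n^2 - \frac{a n_0^2}{2b}$, which is what makes the modulation in the statement exactly cancel the unwanted terms and turn the operation into a clean translation of $\mathbf{y}$ together with a signal-independent unimodular constant. Everything else is an immediate transcription of the continuous argument, with $\mathcal{F}$ replaced by the discrete Fourier transform and $\mathbb{R}$ replaced by $\mathbb{Z}$ in the shift; since the excerpt ends at the statement of Lemma~\ref{lem:dis}, I would simply note that the proof is identical in form to that of Lemma~\ref{lem:con} and present the three short identities.
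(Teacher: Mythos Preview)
Your approach is correct and matches the paper's own treatment: the paper simply states that the proof is identical to that of Lemma~\ref{lem:con} and omits the details, and your reduction via (\ref{dis:OF}) to the discrete Fourier transform of the chirped signal $\mathbf{y}$ is exactly that argument written out. One small slip to fix when you write it up: the square-completion identity you quote has the two quadratic terms swapped; the correct version is $\frac{a}{2b}n^{2}-\frac{an_{0}}{b}n=\frac{a}{2b}(n-n_{0})^{2}-\frac{an_{0}^{2}}{2b}$, which is what actually turns the modulated shifted signal into $e^{-jan_{0}^{2}/(2b)}\,\mathbf{y}[\cdot-n_{0}]$.
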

The proof of Lemma \ref{lem:dis} is similar to Lemma \ref{lem:con}. In order not to repeat, we leave out all the details.

In the following, we investigate nontrivial ambiguities of discrete OLCT phase retrieval.

\begin{Theorem}\label{thm:dis}
Let $\mathbf{x},\,\mathbf{y}\in \ell^2(\mathbb{Z})$ be finite complex signals. Then, $\mathbf{x}$ and $\mathbf{y}$ have same magnitudes of OLCT if and only if there exist finite signals $\mathbf{x}_1,\mathbf{x}_2 \in \ell^2(\mathbb{Z})$, such that
\begin{equation}\label{thm:dis1}
      \mathbf{x}=e^{\frac{-ja{(\cdot)}^2}{2b}}({\mathbf{x}_1}*{\mathbf{x}_2})\,\,{\mbox{and}}\,\, \mathbf{y}=e^{j\beta}e^{\frac{-ja{(\cdot)}^2}{2b}}\left({\mathbf{x}_1[\cdot -n_0]*\overline{\mathbf{x}_2[-\cdot]}}\right)
\end{equation}
for some $\beta\in [-\pi,\pi), \,n_0\in \mathbb{Z}$. Here, $*$ represents the discrete convolution operator.
\end{Theorem}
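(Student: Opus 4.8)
The strategy is to reduce the discrete OLCT phase retrieval problem to the already-understood discrete FT phase retrieval problem, exactly mirroring the continuous case in Theorem~\ref{Thm:Na}. The key device is the chirp substitution: for a finite signal $\mathbf{x}$, set $\widetilde{\mathbf{x}}[n] := x[n]e^{\frac{ja n^2}{2b}}$, which is again a finite signal with the same support. By the relation \eqref{dis:OF} between the discrete OLCT and the discrete FT, the factor $\sqrt{1/(j2\pi b)}\,e^{j[\frac{d}{2b}(y_0-u)^2+u\omega_0]}$ has modulus $\frac{1}{\sqrt{2\pi |b|}}$, a nonzero constant independent of $\mathbf{x}$ and $\mathbf{y}$. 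Hence $|O^A\mathbf{x}(u)| = |O^A\mathbf{y}(u)|$ for all $u$ is equivalent to $|\widehat{\widetilde{\mathbf{x}}}(\xi)| = |\widehat{\widetilde{\mathbf{y}}}(\xi)|$ for all $\xi$, where $\widehat{\cdot}$ denotes the discrete FT and we have used that $u\mapsto (u-y_0)/b$ is a bijection on the frequency line.

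First I would prove the easy direction ($\Leftarrow$): assuming $\mathbf{x}$ and $\mathbf{y}$ have the product form in \eqref{thm:dis1}, note that $\widetilde{\mathbf{x}} = \mathbf{x}_1 * \mathbf{x}_2$ and $\widetilde{\mathbf{y}} = e^{j\beta}\bigl(\mathbf{x}_1[\cdot-n_0] * \overline{\mathbf{x}_2[-\cdot]}\bigr)$, so that $\widehat{\widetilde{\mathbf{x}}}(\xi) = \widehat{\mathbf{x}_1}(\xi)\widehat{\mathbf{x}_2}(\xi)$ and $\widehat{\widetilde{\mathbf{y}}}(\xi) = e^{j\beta}e^{-jn_0\xi}\widehat{\mathbf{x}_1}(\xi)\overline{\widehat{\mathbf{x}_2}(\xi)}$; taking moduli gives $|\widehat{\widetilde{\mathbf{x}}}(\xi)| = |\widehat{\mathbf{x}_1}(\xi)||\widehat{\mathbf{x}_2}(\xi)| = |\widehat{\widetilde{\mathbf{y}}}(\xi)|$, and then the constant-modulus prefactor in \eqref{dis:OF} yields $|O^A\mathbf{x}| = |O^A\mathbf{y}|$.

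For the hard direction ($\Rightarrow$), the plan is to invoke the known characterization of nontrivial ambiguities for the \emph{discrete} Fourier phase retrieval of finitely supported signals (the autocorrelation / polynomial-factorization argument: $|\widehat{\mathbf{z}}(\xi)|^2 = \widehat{\mathbf{a}}(\xi)$ where $\mathbf{a}$ is the autocorrelation, and one factors the associated Laurent polynomial $\sum_n a[n]z^n$ — which has roots occurring in pairs $z_k, 1/\overline{z_k}$ — into $P(z)\overline{P(1/\overline{z})}$, each choice of root assignment producing a signal with the same Fourier magnitude, all of which differ by translation, conjugate-reflection, global phase, and, crucially, \emph{convolution} with a common factor). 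Since $\widetilde{\mathbf{x}}$ and $\widetilde{\mathbf{y}}$ are finitely supported with $|\widehat{\widetilde{\mathbf{x}}}| = |\widehat{\widetilde{\mathbf{y}}}|$, this theory produces finite signals $\mathbf{x}_1, \mathbf{x}_2$ with $\widetilde{\mathbf{x}} = \mathbf{x}_1 * \mathbf{x}_2$ and $\widetilde{\mathbf{y}} = e^{j\beta}\bigl(\mathbf{x}_1[\cdot-n_0]*\overline{\mathbf{x}_2[-\cdot]}\bigr)$ for suitable $\beta \in [-\pi,\pi)$ and $n_0 \in \mathbb{Z}$. Undoing the chirp substitution, $\mathbf{x} = e^{-jan^2/(2b)}\widetilde{\mathbf{x}} = e^{-jan^2/(2b)}(\mathbf{x}_1 * \mathbf{x}_2)$ and similarly for $\mathbf{y}$, which is exactly \eqref{thm:dis1}.

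The main obstacle — and the place where care is needed — is making the $(\Rightarrow)$ direction fully rigorous: one must either cite a clean statement of the discrete Fourier ambiguity characterization in the convolution form (analogous to the role played by \cite[Theorem~3.2]{ChenYang2022} in the continuous proof) or reproduce the Laurent-polynomial factorization argument, being attentive to degenerate cases (zeros on the unit circle, zero at the origin, the trivial factors $z \mapsto z^{n_0}$ and scalar multiples) and to bookkeeping of the support/degree so that $\mathbf{x}_1$ and $\mathbf{x}_2$ come out genuinely finitely supported. A secondary point to check is that the affine change of frequency variable $u \mapsto (u-y_0)/b$ and the chirp multiplier interact correctly, i.e. that multiplication by the unimodular chirp $e^{jan^2/(2b)}$ does not disturb finiteness of support (it does not, since it is a pointwise multiplier) — this is routine but should be stated. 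Given these, the theorem follows by transporting the discrete Fourier result through \eqref{dis:OF}.
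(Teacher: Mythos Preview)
Your proposal is correct and follows essentially the same route as the paper: both reduce to the discrete Fourier case via the chirp substitution $\widetilde{\mathbf{x}}=\mathbf{x}\,e^{ja(\cdot)^2/(2b)}$ and relation \eqref{dis:OF}, then appeal to the known convolution-form characterization of ambiguities for discrete FT phase retrieval. The only cosmetic difference is that where you sketch the Laurent-polynomial / autocorrelation factorization argument yourself, the paper simply cites \cite[Theorem~2.3]{Beinert2015} for that step.
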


\begin{proof}
$\Leftarrow$: Let $\mathbf{x}_1,\mathbf{x}_2 \in \ell^2(\mathbb{Z})$ be finite complex signals, such that
\[
      \mathbf{x}=e^{\frac{-ja{(\cdot)}^2}{2b}}({\mathbf{x}_1}*{\mathbf{x}_2})\,\,{\mbox{and}}\,\,
      \mathbf{y}=e^{j\alpha}e^{\frac{-ja{(\cdot)}^2}{2b}}\left({\mathbf{x}_1[\cdot -n_0]*\overline{\mathbf{x}_2[-\cdot]}}\right),
\]
By (\ref{dis:OF}), we get
\[
|O^A\mathbf{x}(u)|=|O^A\mathbf{y}(u)|.
\]

$\Rightarrow$: Set
\[
\widetilde{\mathbf{x}}:=\mathbf{x}e^{\frac{ja{(\cdot)}^2}{2b}}\,\,{\mbox{and}}\,\,\widetilde{\mathbf{y}}:=\mathbf{y}e^{\frac{ja{(\cdot)}^2}{2b}}.
\]
From (\ref{dis:OF}) and $|O^A\mathbf{x}(u)|=|O^A\mathbf{y}(u)|$, we know that
\[
|\mathcal{F}[\widetilde{\mathbf{x}}](u)|=|\mathcal{F}[\widetilde{\mathbf{y}}](u)|.
\]
Let $\widetilde{A}(u)$ be the autocorrelation signal of $\widetilde{\mathbf{x}}$. Then, we have
\[
\widetilde{A}(u)=|\mathcal{F}[\widetilde{\mathbf{x}}](u)|^2=|\mathcal{F}[\widetilde{\mathbf{y}}](u)|^2.
\]
Following the same procedure for the proof of \cite[Theorem 2.3]{Beinert2015}, yields
\[
\mathcal{F}[\widetilde{\mathbf{x}}](u)=\mathcal{F}[\mathbf{x}_1](u)\mathcal{F}[\mathbf{x}_2](u),
\]
and
\[
\mathcal{F}[\widetilde{\mathbf{y}}](u)=e^{-jun_0}\mathcal{F}[\mathbf{x}_1](u)\overline{\mathcal{F}[\mathbf{x}_2](u)}.
\]
By the relationship between discrete convolution operator and FT, we can get (\ref{thm:dis1}), and the proof is completed.
\end{proof}

In Theorem \ref{thm:dis}, we prove that the nontrivial ambiguities in the discrete OLCT phase retrieval can be characterized by convolution operations.

\section{The STOLCT phase retrieval}\label{STPR}

In this section, we focus on introducing redundancy into the OLCT magnitude-only measurements by utilizing the STOLCT. The redundancy is obtained by a significant overlap between adjacent short-time sections. We show that the redundancy provided by the STOLCT can guarantee the unique recovery of nonseparable (or bandlimited) signals from the STOLCT magnitude-only measurements.

At the begining, we investigate the relation between the STOLCT and ambiguity function.

\begin{Lemma}\label{lem:ST}
Let $f,\;\varphi\in L^2(\mathbb{R})$. Then, we have
\begin{align}
\mathcal{F}(|\mathcal{V}_\varphi^{A} f|^2)(u,-v)
&=e^{jy_0v}\mathcal{A}\tilde{f}(bv,u)\overline{\mathcal{A}\varphi(bv,u)}\nonumber\\
&=e^{jy_0 v}\mathcal{A}f(bv,u-av)\overline{\mathcal{A}\varphi(bv,u)},\;\; u,v\in \mathbb{R},\label{ST:1}
\end{align}
where $\tilde{f}(t):=f(t)e^{\frac{jat^2}{2b}}$.
\begin{proof}
Fix a $v\in \mathbb{R}$, let $f_v(t):=f(t)\overline{\varphi(t-v)}e^{\frac{jat^2}{2b}}$. It is easy to obtain that $f_v(t)\in L^1(\mathbb{R})$, and
\begin{align}
\mathcal{V}_\varphi^{A}f(v,u)&=\frac{1}{\sqrt{j2\pi b}}e^{j\left[\frac{d}{2b}(u-y_0)^2+u\omega_0\right]} \int_{-\infty}^{+\infty}f(t)\overline{\varphi (t-v)}e^{\frac{jat^2}{2b}}e^{-j\frac{u-y_0}{b}t}{\rm{d}}t\nonumber\\
&=\frac{1}{\sqrt{j2\pi b}}e^{j\left[\frac{d}{2b}(u-y_0)^2+u\omega_0\right]}\mathcal{F}\left(f_v\right)\left(\frac{u-y_0}{b}\right).\label{Add:S1}
\end{align}
Set
 $f_v^{\sharp}(t):=\overline{f_v(-t)}$. Then, we have $\overline{\mathcal{F}(f_v)(u)}=\mathcal{F}(f_v^{\sharp})(u)$,
and
\begin{align}
\overline{\mathcal{V}_\varphi^{A}f(v,u)}=\frac{1}{\sqrt{-j2\pi b}}e^{-j\left[\frac{d}{2b}(u-y_0)^2+u\omega_0\right]} \mathcal{F}\left(f_v^{\sharp}\right)\left(\frac{u-y_0}{b}\right).\label{Add:S2}
\end{align}
Combining (\ref{Add:S1}) and (\ref{Add:S2}), we obtain
\begin{eqnarray}
|\mathcal{V}_\varphi^{A}f(v,u)|^2
&=&\frac{1}{2\pi b}\mathcal{F}\left(f_v\right)\left(\frac{u-y_0}{b}\right)\cdot\mathcal{F}\left(f_v^{\sharp}\right)\left(\frac{u-y_0}{b}\right)\nonumber\\
&=&\frac{1}{2\pi b}\mathcal{F}\left(f_v*f_v^{\sharp}\right)\left(\frac{u-y_0}{b}\right),\label{Add:S3}
\end{eqnarray}
where we use convolution theorem in the lase step.
Applying the inverse FT on the both sides of (\ref{Add:S3}), yields
\begin{align}
    \mathcal{F}\left(\left|\mathcal{V}_\varphi^{A}f(v,\cdot)\right|^2 \right)(-v')&=e^{jy_0v^{\prime}}\left(f_v*f_v^{\sharp}\right)(bv^{\prime})\nonumber\\
    &=e^{jy_0v^{\prime}} \int_{-\infty}^{+\infty}f_v(t)f_v^{\sharp}(bv^{\prime}-t){\rm{d}}t\nonumber\\
    &=e^{jy_0v^{\prime}} \int_{-\infty}^{+\infty}\tilde{f}(t)\overline{\varphi(t-v)}\overline{\tilde{f}(t-bv^{\prime})}\varphi(t-bv^{\prime}-v){\rm{d}}t,\label{Add:S4}
\end{align}
where $\tilde{f}(t):=f(t)e^{\frac{jat^2}{2b}}$ for $t\in \mathbb{R}$. By performing FT on both sides of (\ref{Add:S4}) to variable $v$, we have
\begin{align*}
    &\mathcal{F}\left(\left|\mathcal{V}_\varphi^{A}f\right|^2 \right)(u^{\prime},-v^\prime)\\
    =&e^{jy_0v^{\prime}}\int_{-\infty}^{+\infty}\int_{-\infty}^{+\infty}\tilde{f}(t)\overline{\varphi(t-v)}
     \overline{\tilde{f}(t-bv^{\prime})}\varphi(t-bv^{\prime}-v)e^{-jvu^{\prime}}{\rm{d}}t{\rm{d}}v \\
    =&e^{jy_0v^{\prime}}\int_{-\infty}^{+\infty}\tilde{f}(t)\overline{\tilde{f}(t-bv^{\prime})}e^{-jtu^{\prime}}
    \int_{-\infty}^{+\infty}\overline{\varphi(t-v)}\varphi(t-bv^{\prime}-v)e^{j(t-v)u^{\prime}}{\rm{d}}v{\rm{d}}t \\
    =&e^{jy_0v^{\prime}} \int_{-\infty}^{+\infty}\tilde{f}(t)\overline{\tilde{f}(t-bv^{\prime})}e^{-jtu^{\prime}}{\rm{d}}t
    \cdot\int_{-\infty}^{+\infty}\overline{\varphi(v)}\varphi(v-bv^{\prime})e^{jvu^{\prime}}{\rm{d}}v \\
    =&e^{jy_0v^{\prime}}\mathcal{V}_{\tilde{f}}\tilde{f}(bv^{\prime},u^{\prime})\cdot \overline{\mathcal{V}_{\varphi}\varphi(bv^{\prime},u^{\prime})}\\
    =&e^{jy_0v^{\prime}}\mathcal{A}\tilde{f}(bv^{\prime},u^{\prime})\overline{\mathcal{A}\varphi(bv^{\prime},u^{\prime})}.
\end{align*}
Furthermore,
\begin{align*}
    \mathcal{V}_{\tilde{f}}\tilde{f}(bv^{\prime},u^{\prime})
    =&e^{-\frac{jab{v^{\prime}}^2}{2}} \int_{-\infty}^{+\infty}f(t)\overline{f(t-bv^{\prime})}e^{-jt(u^{\prime}-av^{\prime})}{\rm{d}}t\\
    =&e^{-\frac{jab{v^{\prime}}^2}{2}} \mathcal{V}_{f}f(bv^{\prime},u^{\prime}-av^{\prime}) \\
    =&e^{-\frac{jbv^{\prime}u^{\prime}}{2}}\mathcal{A}f(bv^{\prime},u^{\prime}-av^{\prime}),
\end{align*}
which completes the proof.
\end{proof}
\end{Lemma}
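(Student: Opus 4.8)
The plan is to reduce the STOLCT to the ordinary short-time Fourier transform via the intertwining identity~(\ref{re:St}) (equivalently, the defining integral together with~(\ref{rel:COLCT})), and then to exploit the classical relation between a spectrogram and an ambiguity function. First I would fix $v\in\mathbb{R}$ and write $\mathcal{V}_\varphi^A f(v,u)$ as a unimodular chirp times $\frac{1}{\sqrt{j2\pi b}}\mathcal{F}(f_v)\big(\tfrac{u-y_0}{b}\big)$, where $f_v(t):=\tilde f(t)\overline{\varphi(t-v)}\in L^1(\mathbb{R})$ and $\tilde f(t)=f(t)e^{jat^2/(2b)}$. Since the chirp and the prefactor contribute only the constant $\tfrac{1}{2\pi b}$ to the squared modulus, this gives $|\mathcal{V}_\varphi^A f(v,u)|^2=\tfrac{1}{2\pi b}\big|\mathcal{F}(f_v)\big(\tfrac{u-y_0}{b}\big)\big|^2$; writing $|\mathcal{F}(f_v)|^2=\mathcal{F}(f_v)\,\mathcal{F}(f_v^{\sharp})=\mathcal{F}(f_v*f_v^{\sharp})$ with $f_v^{\sharp}(t):=\overline{f_v(-t)}$ then expresses the right-hand side as a single Fourier transform of a convolution.

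Next I would apply the inverse FT in $u$. The only subtlety is bookkeeping: the argument $\tfrac{u-y_0}{b}$ forces the substitution $u=b\xi+y_0$, which cancels the factor $\tfrac{1}{2\pi b}$ and produces $e^{jy_0 v'}$ together with a dilation by $b$, yielding $\mathcal{F}\big(|\mathcal{V}_\varphi^A f(v,\cdot)|^2\big)(-v')=e^{jy_0v'}(f_v*f_v^{\sharp})(bv')$. Expanding this convolution as $\int \tilde f(t)\overline{\varphi(t-v)}\,\overline{\tilde f(t-bv')}\,\varphi(t-bv'-v)\,{\rm{d}}t$ and then taking the FT in the remaining variable $v$, I would split $e^{-jvu'}=e^{-jtu'}e^{j(t-v)u'}$, apply Fubini to decouple the $t$- and $v$-integrals (legitimate since for fixed $v$ the integrand is $L^1$ and, after the split, each factor is an STFT of an $L^2$ function), and substitute $w=t-v$ inside. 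The $t$-integral becomes $\mathcal{V}_{\tilde f}\tilde f(bv',u')$ and the $v$-integral becomes $\overline{\mathcal{V}_\varphi\varphi(bv',u')}$, so $\mathcal{F}\big(|\mathcal{V}_\varphi^A f|^2\big)(u',-v')=e^{jy_0v'}\,\mathcal{V}_{\tilde f}\tilde f(bv',u')\,\overline{\mathcal{V}_\varphi\varphi(bv',u')}$.

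Finally I would pass from short-time transforms to ambiguity functions. By~(\ref{Add:S5}) one has $\mathcal{V}_gg(\tau,\eta)=e^{-j\tau\eta/2}\mathcal{A}g(\tau,\eta)$ for any $g$; applying this with $g=\tilde f$ and with $g=\varphi$, the two half-phases $e^{\mp jbv'u'/2}$ cancel, which gives the first claimed identity $\mathcal{F}\big(|\mathcal{V}_\varphi^A f|^2\big)(u,-v)=e^{jy_0v}\mathcal{A}\tilde f(bv,u)\,\overline{\mathcal{A}\varphi(bv,u)}$. For the second identity I would compute $\mathcal{A}\tilde f$ directly from its definition~(\ref{am}): since the chirp difference equals $\frac{a}{2b}\big[(t+\tfrac{\tau}{2})^2-(t-\tfrac{\tau}{2})^2\big]=\frac{at\tau}{b}$, the chirp merely shifts the frequency argument, so $\mathcal{A}\tilde f(\tau,\eta)=\mathcal{A}f(\tau,\eta-\tfrac{a\tau}{b})$, and with $\tau=bv$ this turns $\mathcal{A}\tilde f(bv,u)$ into $\mathcal{A}f(bv,u-av)$. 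I expect no genuine obstacle: the work is the three changes of variables plus one Fubini swap, and the only error-prone point is the scaling-and-phase bookkeeping in the inverse-FT step, so I would keep all constants explicit there.
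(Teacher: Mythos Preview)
Your proposal is correct and follows essentially the same route as the paper: write $\mathcal{V}_\varphi^{A}f(v,u)$ as a unimodular chirp times $\mathcal{F}(f_v)\big(\tfrac{u-y_0}{b}\big)$, convert the squared modulus into $\mathcal{F}(f_v*f_v^{\sharp})$, invert in $u$ with the substitution $u=b\xi+y_0$, expand the convolution, take the FT in $v$ with the split $e^{-jvu'}=e^{-jtu'}e^{j(t-v)u'}$ and Fubini, and finally pass to ambiguity functions using~(\ref{Add:S5}) and the chirp-difference identity. The only cosmetic difference is that the paper computes $\mathcal{V}_{\tilde f}\tilde f$ first and then converts to $\mathcal{A}f$, whereas you compute $\mathcal{A}\tilde f$ directly; both amount to the same one-line calculation.
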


\subsection{The STOLCT phase retrieval for nonseparable signals}

In this subsection, we derive some sufficient and necessary conditions that can guarantee a square integrable function can be uniquely recovered from the magnitude-only STOLCT measurements.

We call a function $f\in L^2(\mathbb{R})$ is separable if there exist two nonzero functions $f_1,f_2\in L^2(\mathbb{R})$, such that $f=f_1+f_2$ and $f_1f_2=0$. We first review a sufficient and necessary condition that can guarantee unique recovery of a signal from its magnitude-only measurements.

\begin{Proposition}\cite[Theorem~II.2]{YangChen2020}\label{pro:sep}
    Assume that $V$ is a linear space of real-valued continuous functions on the real line $\mathbb{R}$. Then, a function $f\in V$ can be uniquely determined by its magnitude-only measurements $\left\{|f(t)|, t\in \mathbb{R}\right\}$, up to a sign, if and only if $f$ is nonseparable.
\end{Proposition}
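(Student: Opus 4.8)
The plan is to reduce the magnitude-only hypothesis to a pointwise sign ambiguity and then encode that ambiguity as a candidate separating decomposition that is built \emph{linearly} out of $f$ and any competitor $g$.

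First I would record the elementary pointwise observation: for real-valued continuous $f,g$ on $\mathbb{R}$, the identity $|f(t)|=|g(t)|$ for every $t$ is equivalent to $f(t)^2=g(t)^2$ for every $t$, hence to $g(t)=\varepsilon(t)f(t)$ with $\varepsilon(t)\in\{-1,+1\}$ at each $t$ with $f(t)\neq 0$ and $g(t)=0=f(t)$ elsewhere. Since every element of $V$ is continuous, there are no measure-theoretic subtleties, and ``$f$ determined up to a sign'' means exactly: every $g\in V$ with $|g|=|f|$ satisfies $g=f$ or $g=-f$ identically on $\mathbb{R}$.

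For the implication ``nonseparable $\Rightarrow$ uniquely determined'', let $g\in V$ satisfy $|g|=|f|$ and set $f_1:=(f+g)/2$ and $f_2:=(f-g)/2$. Because $V$ is a linear space, $f_1,f_2\in V$, and plainly $f=f_1+f_2$. Moreover $f_1(t)f_2(t)=\tfrac14\bigl(f(t)^2-g(t)^2\bigr)=0$ for all $t$ by the pointwise fact, so $f_1f_2\equiv 0$. If $f$ is nonseparable this forces $f_1\equiv 0$ or $f_2\equiv 0$, i.e. $g=-f$ or $g=f$, which is the desired uniqueness. For the converse I would argue by contraposition: assume $f$ is separable, say $f=f_1+f_2$ with $f_1,f_2\in V$ nonzero and $f_1f_2\equiv 0$, and put $g:=f_1-f_2\in V$. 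Using $f_1f_2\equiv 0$ twice gives $g^2=f_1^2+f_2^2=(f_1+f_2)^2=f^2$, hence $|g|=|f|$; but $g\neq\pm f$ since otherwise one of $f_1,f_2$ would vanish. Thus $f$ is not determined up to a sign.

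The only genuinely delicate point is the bookkeeping around where the separating pair is allowed to live: the decomposition $f_1=(f+g)/2$, $f_2=(f-g)/2$ automatically stays inside $V$ by linearity, and this is precisely what makes the equivalence sharp. One must also keep the disjointness condition $f_1f_2=0$ read pointwise (as is natural for continuous functions), so that the algebraic identity $f_1^2+f_2^2=(f_1+f_2)^2$ holds identically; beyond this no further machinery is required.
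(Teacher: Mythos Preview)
Your argument is correct. Note, however, that the paper does not supply its own proof of this proposition: it is quoted verbatim as \cite[Theorem~II.2]{YangChen2020} and used as a black box in the subsequent STOLCT results. So there is no ``paper's proof'' to compare against; you have simply filled in the cited result.

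One point worth flagging is the definition of \emph{separable}. The paper's preceding sentence defines separability with $f_1,f_2\in L^2(\mathbb{R})$, whereas your proof (correctly, and in line with the original Chen--Cheng--Sun formulation) requires the separating pair to lie in $V$. Your forward direction hinges on this: the decomposition $f_1=(f+g)/2$, $f_2=(f-g)/2$ lands in $V$ only by linearity of $V$, and nonseparability \emph{relative to $V$} is what forces one of them to vanish. If separability were read relative to all of $L^2(\mathbb{R})$, the equivalence would fail (a nonseparable-in-$V$ function could still be separable in $L^2$, yet your argument shows it is uniquely determined within $V$). You are right to highlight this; it is the only subtlety in an otherwise elementary proof.
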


Now, we prove a uniqueness result on STOLCT phase retrieval for nonseparable real-valued continuous functions.

\begin{Theorem}\label{thm:non}
  Assume that a window function $\varphi \in L^2(\mathbb{R})$ satisfies
  \[
    \mathcal{A}\varphi(0,u)\neq 0,\,\;{\mbox{for}}\;\; a.e.\;\; u\in \mathbb{R},
  \]
  then, the following two statements are equivalent for two nonseparable real-valued continuous functions $f,\,g\in L^2(\mathbb{R})$:
  \begin{enumerate}[label=(\roman*)]
\item  $f =\pm g$.
\item  $|\mathcal{V}_\varphi^{A} f|=|\mathcal{V}_\varphi^{A} g|$.
\end{enumerate}
\begin{proof}
$(i)\Rightarrow(ii)$: If $f =\pm g$, it can be easily obtained that
\[
|\mathcal{V}_\varphi^{A} f|=|\mathcal{V}_\varphi^{A} g|.
\]

$(ii)\Rightarrow(i)$: Assume that $|\mathcal{V}_\varphi^{A} f|=|\mathcal{V}_\varphi^{A} g|$. From Lemma~\ref{lem:ST}, we know
\[
\mathcal{A}f(bv,u-av)\overline{\mathcal{A}\varphi(bv,u)}=\mathcal{A}g(bv,u-av)\overline{\mathcal{A}\varphi(bv,u)},\;\;{\mbox{for}}\;\; u,v\in\mathbb{R}.
\]
Since $\mathcal{A}\varphi(0,u)\neq 0$, for a.e. $u\in \mathbb{R}$, we get
\[
\mathcal{A}f(0,u)=\mathcal{A}g(0,u),\,\;{\mbox{for}}\;\; a.e.\;\; u\in \mathbb{R}.
\]
Since $f,\,g\in L^2(\mathbb{R})$, by (\ref{Add:S6}), we know that
$\mathcal{A}f(0,u)$ and $\mathcal{A}g(0,u)$ are continuous. Hence,
\[
\mathcal{A}f(0,u)=\mathcal{A}g(0,u),\,\;{\mbox{for all}}\;\, u\in \mathbb{R},
\]
i.e., $|f|=|g|$. It follows from Proposition~\ref{pro:sep} that $f=\pm g$. This completes the proof.
\end{proof}
\end{Theorem}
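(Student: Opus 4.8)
The plan is to prove $(ii)\Rightarrow(i)$ by reducing the STOLCT magnitude identity to an equality of ordinary magnitudes $|f|=|g|$ and then invoking Proposition~\ref{pro:sep}. First I would apply Lemma~\ref{lem:ST}, which is already available, to rewrite $|\mathcal{V}_\varphi^{A}f|^2=|\mathcal{V}_\varphi^{A}g|^2$ in the Fourier domain: taking the two-dimensional Fourier transform of both sides and using \eqref{ST:1}, the common factor $e^{jy_0v}$ cancels and one obtains
\[
\mathcal{A}f(bv,u-av)\,\overline{\mathcal{A}\varphi(bv,u)}=\mathcal{A}g(bv,u-av)\,\overline{\mathcal{A}\varphi(bv,u)}
\]
for all $u,v\in\mathbb{R}$. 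The second step is to specialize to $v=0$: since $b\neq 0$ is fixed this is a legitimate slice, and the identity collapses to $\mathcal{A}f(0,u)\,\overline{\mathcal{A}\varphi(0,u)}=\mathcal{A}g(0,u)\,\overline{\mathcal{A}\varphi(0,u)}$.

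Next I would use the hypothesis $\mathcal{A}\varphi(0,u)\neq 0$ for a.e.\ $u$ to divide out the window factor, yielding $\mathcal{A}f(0,u)=\mathcal{A}g(0,u)$ for a.e.\ $u\in\mathbb{R}$. To upgrade ``a.e.'' to ``everywhere'' I would invoke continuity: since $f,g\in L^2(\mathbb{R})$, the functions $|f|^2,|g|^2\in L^1(\mathbb{R})$, so by \eqref{Add:S6} the maps $u\mapsto\mathcal{A}f(0,u)=\mathcal{F}(|f|^2)(\eta)$ and $u\mapsto\mathcal{A}g(0,u)$ are continuous, hence two continuous functions agreeing a.e.\ agree everywhere. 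Then $\mathcal{F}(|f|^2)=\mathcal{F}(|g|^2)$, and injectivity of the Fourier transform on $L^1$ gives $|f|^2=|g|^2$ a.e., i.e.\ $|f|=|g|$.

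The final step is to apply Proposition~\ref{pro:sep} with $V$ the linear space of real-valued continuous functions on $\mathbb{R}$: since $f$ is assumed real-valued, continuous, and nonseparable, it is uniquely determined by its magnitude $\{|f(t)|:t\in\mathbb{R}\}$ up to a sign, and since $|f|=|g|$ this forces $f=\pm g$. The direction $(i)\Rightarrow(ii)$ is immediate because replacing $f$ by $-f$ changes $\mathcal{V}_\varphi^{A}f$ only by the sign $-1$, leaving the magnitude unchanged.

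The main obstacle I anticipate is not conceptual but bookkeeping: one must be careful that the Fourier-transform manipulations in Lemma~\ref{lem:ST} are applied in the correct variables so that the $v=0$ slice really does isolate $\mathcal{A}f(0,\cdot)$, and that the ``a.e.\ nonvanishing'' hypothesis on $\mathcal{A}\varphi(0,\cdot)$ is exactly what is needed to cancel the window contribution on a full-measure set before passing to continuity. A secondary point worth stating explicitly is why $|f|=|g|$ a.e.\ together with continuity of $f$ and $g$ gives $|f|=|g|$ everywhere, which is needed to legitimately apply Proposition~\ref{pro:sep}, whose hypothesis is phrased pointwise for continuous functions.
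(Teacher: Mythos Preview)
Your proposal is correct and follows essentially the same route as the paper: apply Lemma~\ref{lem:ST} to obtain $\mathcal{A}f(bv,u-av)\overline{\mathcal{A}\varphi(bv,u)}=\mathcal{A}g(bv,u-av)\overline{\mathcal{A}\varphi(bv,u)}$, set $v=0$, use the nonvanishing hypothesis on $\mathcal{A}\varphi(0,\cdot)$ together with the continuity of $\mathcal{A}f(0,\cdot)=\mathcal{F}(|f|^2)$ to conclude $|f|=|g|$, and finish with Proposition~\ref{pro:sep}. Your write-up is in fact slightly more careful than the paper's in spelling out the Fourier-inversion step and the passage from a.e.\ to pointwise equality of $|f|$ and $|g|$.
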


Next, we state that a function $f\in L^2(\mathbb{R})$ can be uniquely determined from its magnitude-only STOLCT measurements $|\mathcal{V}_\varphi^{A} f|$ up to a global phase, providing the ambiguity function of window function $\varphi \in L^2(\mathbb{R})$ is nowhere vanishing.

\begin{Theorem}\label{thm:l2}
  Suppose that a window function $\varphi \in L^2(\mathbb{R})$ satisfies
  \[
    \mathcal{A}\varphi(v,u)\neq 0,\,\;{\mbox{for}}\;\; a.e.\;\; (v,u)\in \mathbb{R}^2,
  \]
  then, the following assertions are equivalent for two functions $f,\,g\in L^2(\mathbb{R})$:
  \begin{enumerate}[label=(\roman*)]
\item  $f =e^{j\beta} g$, for some $\beta\in \mathbb{R}$.
\item  $|\mathcal{V}_\varphi^{A} f|=|\mathcal{V}_\varphi^{A} g|$.
\end{enumerate}
\end{Theorem}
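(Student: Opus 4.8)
The plan is to mirror the strategy of Theorem~\ref{thm:non}, but at the level of the full ambiguity function $\mathcal{A}f$ rather than just its slice at $\tau = 0$. The implication $(i)\Rightarrow(ii)$ is immediate from the definition of the STOLCT, since multiplying $f$ by a unimodular constant $e^{j\beta}$ pulls out a factor of modulus one from the integral defining $\mathcal{V}_\varphi^A f(v,u)$. So the work is all in $(ii)\Rightarrow(i)$.

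First I would invoke Lemma~\ref{lem:ST}: from $|\mathcal{V}_\varphi^A f| = |\mathcal{V}_\varphi^A g|$ we get, after taking the two-dimensional Fourier transform of both sides, the identity
\[
e^{jy_0 v}\,\mathcal{A}f(bv,u-av)\,\overline{\mathcal{A}\varphi(bv,u)} = e^{jy_0 v}\,\mathcal{A}g(bv,u-av)\,\overline{\mathcal{A}\varphi(bv,u)}, \qquad u,v\in\mathbb{R}.
\]
Cancelling $e^{jy_0 v}$ and then dividing by $\overline{\mathcal{A}\varphi(bv,u)}$ — which is legitimate for a.e.\ $(v,u)$ precisely because of the hypothesis $\mathcal{A}\varphi(v,u)\neq 0$ a.e., together with the fact that the affine change of variables $(v,u)\mapsto(bv,u)$ with $b\neq 0$ preserves null sets — yields $\mathcal{A}f(bv,u-av) = \mathcal{A}g(bv,u-av)$ for a.e.\ $(v,u)$. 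A further invertible affine substitution $(\tau,\eta) = (bv,\,u-av)$ gives $\mathcal{A}f(\tau,\eta) = \mathcal{A}g(\tau,\eta)$ for a.e.\ $(\tau,\eta)\in\mathbb{R}^2$. Since $f,g\in L^2(\mathbb{R})$, both ambiguity functions are continuous (indeed $\mathcal{A}f(\tau,\eta) = e^{j\tau\eta/2}\mathcal{V}_f f(\tau,\eta)$ and the STFT of an $L^2$ function against an $L^2$ window is continuous), so the a.e.\ equality upgrades to $\mathcal{A}f = \mathcal{A}g$ everywhere. Then the characterization~(\ref{eq:am}) of equality of ambiguity functions delivers exactly $g = \lambda f$ for some $\lambda\in\mathbb{C}$ with $|\lambda|=1$, i.e.\ $f = e^{j\beta}g$ with $\beta = -\arg\lambda$, which is statement~$(i)$.

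The only delicate point — and the step I would be most careful about — is the division by $\overline{\mathcal{A}\varphi(bv,u)}$: one must check that the set where $\mathcal{A}\varphi(bv,u)=0$, pulled back through the coordinate changes, remains a Lebesgue-null set so that the resulting equality of $\mathcal{A}f$ and $\mathcal{A}g$ holds a.e.\ on all of $\mathbb{R}^2$; this is where the nonvanishing-a.e.\ hypothesis on $\mathcal{A}\varphi$ is used in full strength (contrast Theorem~\ref{thm:non}, where only the slice $v=0$, i.e.\ $\mathcal{A}\varphi(0,u)\neq0$, was needed). Everything else is routine: the passage from a.e.\ equality to everywhere equality via continuity, and then a direct appeal to~(\ref{eq:am}).
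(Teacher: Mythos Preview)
Your proof is correct and follows essentially the same approach as the paper: both invoke Lemma~\ref{lem:ST}, cancel the a.e.-nonvanishing factor $\overline{\mathcal{A}\varphi(bv,u)}$, and arrive at $\mathcal{A}f=\mathcal{A}g$ (a.e.) after the invertible affine change of variables. The only minor difference is in the endgame: you upgrade the a.e.\ equality to an everywhere equality via continuity of the ambiguity function and then invoke~(\ref{eq:am}) directly, whereas the paper defers to the argument of \cite[Theorem~3.1]{LiRui2024} to pass from $\mathcal{A}f=\mathcal{A}g$ a.e.\ to $\langle f,\Phi\rangle=e^{j\beta}\langle g,\Phi\rangle$ for all $\Phi\in L^2(\mathbb{R})$ --- your route is slightly more self-contained but otherwise equivalent.
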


\begin{proof}
$(i)\Rightarrow(ii)$: Assume that $f =e^{j\beta}g$. Obviously, we have
\[
|\mathcal{V}_\varphi^{A} f|=|\mathcal{V}_\varphi^{A} g|.
\]

$(ii)\Rightarrow(i)$: Assume that $|\mathcal{V}_\varphi^{A} f|=|\mathcal{V}_\varphi^{A} g|$. From Lemma~\ref{lem:ST}, we know
\begin{equation}\label{A1}
\mathcal{A}f(bv,u-av)\overline{\mathcal{A}\varphi(bv,u)}=\mathcal{A}g(bv,u-av)\overline{\mathcal{A}\varphi(bv,u)},\;\;{\mbox{for}}\;\; (u,v)\in\mathbb{R}^2.
\end{equation}
Since $\mathcal{A}\varphi(v,u)\neq 0$, for a.e. $(v,u)\in \mathbb{R}^2$, we get
\[
\mathcal{A}f(bv,u-av)=\mathcal{A}g(bv,u-av),\;\;{\mbox{for a.e. }}\;\; (u,v)\in\mathbb{R}^2.
\]
According to a similar procedure for the proof of \cite[Theorem 3.1]{LiRui2024}, we know
\[
\langle f, \Phi\rangle=e^{j\beta}\langle g, \Phi\rangle,\;\; {\mbox{for any}}\;\; \Phi\in L^2(\mathbb{R}).
\]
Therefore, $f =e^{j\beta} g$.
\end{proof}

\subsection{The sampled STOLCT phase retrieval for bandlimted signals}
In this subsection, we investigate the uniqueness results for a bandlimited function which is recovered from its sampled STLOCT measurements. First, we consider a class of functions that are bandlimited in classical FT domain.
\begin{Theorem}\label{Thm:Sam}
For $\Omega_{1},\;\Omega_{2}>0$, let $\varphi \in PW_{\Omega_{1}}^2$ satisfy
\[
\mathcal{A}\varphi(0,u)\neq0,\,\;{\mbox{for}}\;\; a.e.\;\; u\in(-2\Omega_2,2\Omega_2),
\]
and $\mathcal{F}{\varphi}$ be bounded.
Then, the following two statements are equivalent for two real-valued continuous functions $f,g\in PW_{\Omega_2}^2$ on the real line:
\begin{enumerate}[label=(\roman*)]
    \item  $f=\pm g$.
    \item  $\left|\mathcal{V}_\varphi^{A} f\left(\frac{n}{4\Omega_{b,y_0}^{u}},u\right)\right|=\left|\mathcal{V}_\varphi^{A} g\left(\frac{n}{4\Omega_{b,y_0}^u},u\right)\right|$, \;\;$n\in\mathbb{Z}$, \\\\
    where $\Omega_{b,y_0}^u=\max\left\{\left|\Omega_1+\frac{u-y_0}{b}\right|,\left|-\Omega_1
        +\frac{u-y_0}{b}\right|\right\}.$
\end{enumerate}
\begin{proof}
$(i)\Rightarrow(ii)$: Assume that $f =\pm g$. Then, we have
\begin{equation}\label{Add:S7}
\left|\mathcal{V}_\varphi^{A} f\left(\frac{n}{4\Omega_{b,y_0}^{u}},u\right)
\right|=\left|\mathcal{V}_\varphi^{A} g\left(\frac{n}{4\Omega_{b,y_0}^u},u\right)\right|, \;\;n\in\mathbb{Z}.
\end{equation}

$(ii)\Rightarrow(i)$: Suppose that (\ref{Add:S7}) holds. Let $\tilde{f}(t):=f(t)e^{\frac{jat^2}{2b}}$, and $T_\tau f(t):=f(t-\tau)$.
Combining (\ref{sft}) and (\ref{re:St}), we have
\begin{eqnarray}
|\mathcal{V}_\varphi^{A} f(v,u)|^2&=&\frac{1}{2\pi b}\mathcal{V}_\varphi \tilde{f}\left(v,\frac{u-y_0}{b}\right)\cdot\overline{\mathcal{V}_\varphi \tilde{f}\left(v,\frac{u-y_0}{b}\right)}\nonumber\\
&=&\frac{1}{2 \pi b }\mathcal{F}^{-1}\left[\mathcal{F}\tilde{f} \cdot T_{\frac{u-y_0}{b}}\overline{\mathcal{F}{\varphi}}\right](v) \cdot \overline{\mathcal{F}^{-1}\left[\mathcal{F}\tilde{f} \cdot T_{\frac{u-y_0}{b}}\overline{\mathcal{F}{\varphi}}\right](v)}.\label{sam:1}
\end{eqnarray}
Set $f^{\sharp}(t):=\overline{f(-t)}$. Then, we know $\mathcal{F}^{-1}(f^{\sharp})=\overline{\mathcal{F}^{-1}f}$.
It follows from convolution theorem that
\begin{equation}\label{Add:S8}
\mathcal{F}(|\mathcal{V}_\varphi^{A} f(\cdot,u)|^2)=\frac{1}{b}\left[\left(\mathcal{F}\tilde{f} \cdot T_{\frac{u-y_0}{b}}\overline{\mathcal{F}{\varphi}}\right)*\left(\mathcal{F}\tilde{f} \cdot T_{\frac{u-y_0}{b}}\overline{\mathcal{F}{\varphi}}\right)^{\sharp} \right],
\end{equation}
for fixed $u\in \mathbb{R}$.
Since ${\rm{supp}}(\mathcal{F}{\varphi})\subset [-\Omega_1,\Omega_1]$, we have
\[
{\rm{supp}}\left(T_{\frac{u-y_0}{b}}{\mathcal{F}{\varphi}}\right) \subset [-\Omega_1+\frac{u-y_0}{b},\Omega_1+\frac{u-y_0}{b}]\subset[-\Omega_{b,y_0}^u,\Omega_{b,y_0}^u].
\]
Hence,
\[
{\rm{supp}}\left(\mathcal{F}\tilde{f} \cdot T_{\frac{u-y_0}{b}}\overline{\mathcal{F}{\varphi}}\right)\subset[-\Omega_{b,y_0}^u,\Omega_{b,y_0}^u].
\]
By (\ref{Add:S8}), we know that
\[
{\rm{supp}}\left(\mathcal{F}|\mathcal{V}_\varphi^{A} f(\cdot,u)|^2\right)\subset[-2\Omega_{b,y_0}^u,2\Omega_{b,y_0}^u].
\]
Moreover, since $\mathcal{F}{\varphi}$ is bounded, $\mathcal{F}\tilde{f} \cdot T_{\frac{u-y_0}{b}}\overline{\mathcal{F}{\varphi}}\in L^2(\mathbb{R})$. Thus, $|\mathcal{V}_\varphi^{A} f(\cdot,u)|^2\in L^2(\mathbb{R})$. Therefore, $|\mathcal{V}_\varphi^{A} f(\cdot,u)|^2\in PW_{2\Omega_{b,y_0}^u}^2$.
By Proposition~\ref{pro:sha}, we obtain
   \[
   |\mathcal{V}_\varphi^{A} f(v,u)|=|\mathcal{V}_\varphi^{A} g(v,u)|,\,\;{\mbox{for any}}\;\; u \in \mathbb{R.}
   \]
Applying Lemma~\ref{lem:ST}, we get
\begin{equation}\label{A2}
\mathcal{A}f(bv,u-av)\overline{\mathcal{A}\varphi(bv,u)}=\mathcal{A}g(bv,u-av)\overline{\mathcal{A}\varphi(bv,u)},\;\;{\mbox{for}}\;\; (u,v)\in\mathbb{R}^2.
\end{equation}
Since
\[
\mathcal{A}\varphi(0,u)\neq0,\,\;{\mbox{for}}\;\; a.e.\;\; u\in(-2\Omega_2,2\Omega_2),
\]
we have
\[
\mathcal{A}f(0,u)=\mathcal{A}g(0,u),\;\;{\mbox{for}}\;\,a.e.\;\; u\in (-2\Omega_2,2\Omega_2).
\]
As $f,g\in PW_\Omega^2$, it follows from Lemma \ref{lem:am} that $\mathcal{A}f$ and $\mathcal{A}g$ are uniformly continuous, and ${\rm{supp}}(\mathcal{A}f),\;{\rm{supp}}(\mathcal{A}g)\subset\mathbb{R}\times(-2\Omega_2,2\Omega_2)$.
Hence,
\[
\mathcal{A}f(0,u)=\mathcal{A}g(0,u),\;\;{\mbox{for all}}\;\; u\in \mathbb{R}.
\]
Therefore, $|f|=|g|$. By Proposition \ref{pro:sep}, we get $f=\pm g$, which completes the proof.
\end{proof}
\end{Theorem}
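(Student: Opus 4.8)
The plan is to prove $(i)\Rightarrow(ii)$ trivially and to concentrate the work on $(ii)\Rightarrow(i)$, which I would organize into three stages: first, pass from the sampled magnitudes in (ii) to the full identity $|\mathcal{V}_\varphi^{A}f(v,u)|=|\mathcal{V}_\varphi^{A}g(v,u)|$ on all of $\mathbb{R}^{2}$ by a Paley--Wiener/sampling argument in the variable $v$; second, feed this into Lemma~\ref{lem:ST} to get an identity for ambiguity functions; third, specialize to the time-lag $0$, use the non-vanishing of $\mathcal{A}\varphi(0,\cdot)$ to reach $|f|=|g|$, and conclude with Proposition~\ref{pro:sep}. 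The direction $(i)\Rightarrow(ii)$ is immediate: $f=\pm g$ gives $\mathcal{V}_\varphi^{A}f=\pm\mathcal{V}_\varphi^{A}g$ pointwise, hence equal magnitudes everywhere, in particular at the sample nodes.

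For the first stage I would fix $u\in\mathbb{R}$ and analyze $v\mapsto|\mathcal{V}_\varphi^{A}f(v,u)|^{2}$. Using the STOLCT--STFT relation (\ref{re:St}) and the STFT--Fourier relation (\ref{sft}), this equals, up to a positive constant, $|\mathcal{F}^{-1}[\mathcal{F}\tilde f\cdot T_{(u-y_0)/b}\overline{\mathcal{F}\varphi}](v)|^{2}$ with $\tilde f(t):=f(t)e^{jat^{2}/(2b)}$. Taking the Fourier transform in $v$ and applying the convolution theorem turns this into the self-convolution of $\mathcal{F}\tilde f\cdot T_{(u-y_0)/b}\overline{\mathcal{F}\varphi}$ with its conjugate reflection (cf.\ (\ref{Add:S8})). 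Since $\mathrm{supp}(\mathcal{F}\varphi)\subseteq[-\Omega_{1},\Omega_{1}]$, the translate $T_{(u-y_0)/b}\overline{\mathcal{F}\varphi}$ is supported in $[-\Omega_{1}+\tfrac{u-y_0}{b},\,\Omega_{1}+\tfrac{u-y_0}{b}]\subseteq[-\Omega_{b,y_0}^{u},\Omega_{b,y_0}^{u}]$, so the product is supported there too and the self-convolution in $[-2\Omega_{b,y_0}^{u},2\Omega_{b,y_0}^{u}]$. Boundedness of $\mathcal{F}\varphi$ together with $\mathcal{F}\tilde f\in L^{2}$ (as $|e^{jat^{2}/(2b)}|\equiv1$) makes $\mathcal{F}\tilde f\cdot T_{(u-y_0)/b}\overline{\mathcal{F}\varphi}$ an $L^{2}$ function of compact support, hence $L^{1}$, so its inverse transform is bounded and $L^{2}$, whence its modulus squared lies in $L^{2}$; thus $|\mathcal{V}_\varphi^{A}f(\cdot,u)|^{2}\in PW_{2\Omega_{b,y_0}^{u}}^{2}$, and likewise for $g$. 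Proposition~\ref{pro:sha} then reconstructs $v\mapsto|\mathcal{V}_\varphi^{A}f(v,u)|^{2}$ from its samples at spacing $1/(4\Omega_{b,y_0}^{u})$, so (ii) forces $|\mathcal{V}_\varphi^{A}f(v,u)|=|\mathcal{V}_\varphi^{A}g(v,u)|$ for every $v$, and since $u$ was arbitrary, on all of $\mathbb{R}^{2}$.

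For the remaining stages I would insert this into Lemma~\ref{lem:ST} to obtain $\mathcal{A}f(bv,u-av)\,\overline{\mathcal{A}\varphi(bv,u)}=\mathcal{A}g(bv,u-av)\,\overline{\mathcal{A}\varphi(bv,u)}$ for all $(u,v)$, then set $v=0$: the hypothesis $\mathcal{A}\varphi(0,u)\neq0$ for a.e.\ $u\in(-2\Omega_{2},2\Omega_{2})$ yields $\mathcal{A}f(0,u)=\mathcal{A}g(0,u)$ for a.e.\ such $u$. Since $f,g\in PW_{\Omega_{2}}^{2}$, Lemma~\ref{lem:am} says $\mathcal{A}f,\mathcal{A}g$ are uniformly continuous with support in $\mathbb{R}\times[-2\Omega_{2},2\Omega_{2}]$, so the a.e.\ equality propagates to all $u\in\mathbb{R}$; by (\ref{Add:S6}) this reads $\mathcal{F}(|f|^{2})=\mathcal{F}(|g|^{2})$, hence $|f|=|g|$. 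Finally, assuming (as in Theorem~\ref{thm:non}) that $f,g$ are nonseparable, Proposition~\ref{pro:sep} gives $f=\pm g$.

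I expect the decisive step to be the first stage: pinning down the \emph{exact}, $u$-dependent bandwidth $2\Omega_{b,y_0}^{u}$ and verifying the $L^{2}$-membership needed to legitimately invoke the Shannon series. The $u$-dependence is not an artifact — it is forced by the frequency shift $(u-y_0)/b$ in (\ref{re:St}), which is exactly why the sampling nodes in (ii) carry the spacing $1/(4\Omega_{b,y_0}^{u})$. Everything afterwards (the application of Lemma~\ref{lem:ST}, the $v=0$ specialization, the continuity propagation via Lemma~\ref{lem:am}, and the final appeal to Proposition~\ref{pro:sep}) is routine.
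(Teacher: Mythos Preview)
Your proposal is correct and follows the paper's proof essentially line for line: the same bandlimiting argument for $|\mathcal{V}_\varphi^{A}f(\cdot,u)|^{2}$ via (\ref{re:St}), (\ref{sft}) and the convolution theorem, the same appeal to Proposition~\ref{pro:sha} to pass from samples to all $v$, then Lemma~\ref{lem:ST}, specialization at $v=0$, Lemma~\ref{lem:am}, and Proposition~\ref{pro:sep}. The one discrepancy is your final clause ``assuming \ldots\ that $f,g$ are nonseparable'': the paper simply applies Proposition~\ref{pro:sep} without that caveat, and in fact no extra hypothesis is needed, since a nonzero function in $PW_{\Omega_2}^{2}$ is (the restriction of) an entire function, hence real-analytic with isolated zeros, and therefore automatically nonseparable.
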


Finally, we consider the uniqueness of STOLCT phase retrieval for a complex-valued function, which is bandlimited in OLCT domain, instead of FT domain.

For a function $f\in L^2 (\mathbb{R})$ which is bandlimited in OLCT domain, we introduce a generalized Paley-Wiener space as follows:
\[
GPW_\Omega^2:= \left\{f\in L^2 (\mathbb{R}):\; {\rm{supp}}(O^Af)\subseteq[-\Omega,\Omega]\right\}.
\]

Then, we study the uniqueness resluts for a complex-valued bandlimited function from its sampled OLCT measurements.
\begin{Theorem}\label{Thm:Osam}
For $\Omega>0$,\;$\gamma\in(0,\frac{b}{2\Omega^{\prime}}]$, let $\varphi \in L^2(\mathbb{R})$ satisfy
\[
\mathcal{A}\varphi(0,u)\neq0,\,\;{\mbox{and}}\;\;\mathcal{A}\varphi(\gamma,u)\neq0,\;\; {\mbox{for}}\;\; a.e.\;\; u\in\left(-2{\frac{\Omega^{\prime}}{b}},2{\frac{\Omega^{\prime}}{b}}\right),
\]
and $\mathcal{F}{\varphi}$ be bounded.
Then, the following two assertions are equivalent for two complex-valued continuous functions $f,g\in GPW_{\Omega}^2$:
\begin{enumerate}[label=(\roman*)]
    \item  $f=e^{j\beta} g$, for some $\beta\in \mathbb{R}$.
    \item  $\left|\mathcal{V}_\varphi^{A} f\left(\frac{bn}{4\Omega^{\prime}},u\right)\right|=\left|\mathcal{V}_\varphi^{A} g\left(\frac{bn}{4\Omega^{\prime}},u\right)\right|$, \;\;$n\in\mathbb{Z}$,\;\; $\Omega^{\prime}=\max\{|\Omega-y_0|,|\Omega+y_0|\}$.
\end{enumerate}
\begin{proof}
$(i)\Rightarrow(ii)$: Assume that $f=e^{j\beta}g$. Obviously, we have
\begin{equation}\label{Add:S10}
\left|\mathcal{V}_\varphi^{A} f\left(\frac{bn}{4\Omega^{\prime}},u\right)
\right|=\left|\mathcal{V}_\varphi^{A} g\left(\frac{bn}{4\Omega^{\prime}},u\right)\right|, \;\;n\in\mathbb{Z}.
\end{equation}

$(ii)\Rightarrow(i)$: Suppose that (\ref{Add:S10}) holds. Using the same method as Theorem~\ref{Thm:Sam}, we can obtain
\begin{equation}\label{Add:S11}
\mathcal{F}(|\mathcal{V}_\varphi^{A} f(\cdot,u)|^2)=\frac{1}{b}\left[\left(\mathcal{F}\tilde{f} \cdot T_{\frac{u-y_0}{b}}\overline{\mathcal{F}{\varphi}}\right)*\left(\mathcal{F}\tilde{f} \cdot T_{\frac{u-y_0}{b}}\overline{\mathcal{F}{\varphi}}\right)^{\sharp} \right],
\end{equation}
for fixed $u\in \mathbb{R}$. Since $f\in GPW_{\Omega}^2$, by (\ref{rel:COLCT}), we get
\[
{\rm{supp}}(\mathcal{F}\tilde{f})\subset\left[-\frac{\Omega+y_0}{b},\frac{\Omega-y_0}{b}\right].
\]
Hence,
\[
{\rm{supp}}\left(\mathcal{F}\tilde{f} \cdot T_{\frac{u-y_0}{b}}\overline{\mathcal{F}{\varphi}}\right)
 \subset\left[-\frac{\Omega+y_0}{b},\frac{\Omega-y_0}{b}\right]\subset\left[-\frac{\Omega^{\prime}}{b},\frac{\Omega^{\prime}}{b}\right],
\]
where $\Omega^{\prime}=\max\{|\Omega-y_0|,|\Omega+y_0|\}$.
By (\ref{Add:S11}), we have
\[
{\rm{supp}}\left(\mathcal{F}|\mathcal{V}_\varphi^{A} f(\cdot,u)|^2\right)\subset\left[-\frac{2\Omega^{\prime}}{b},\frac{2\Omega^{\prime}}{b}\right],\,\;{\mbox{for fixed}}\;\; u\in \mathbb{R}.
\]
Furthermore, as $\mathcal{F}{\varphi}$ is bounded, $\mathcal{F}\tilde{f} \cdot T_{\frac{u-y_0}{b}}\overline{\mathcal{F}{\varphi}}\in L^2(\mathbb{R})$. Hence, $|\mathcal{V}_\varphi^{A} f(\cdot,u)|^2\in L^2(\mathbb{R})$. Therefore, $|\mathcal{V}_\varphi^{A} f(\cdot,u)|^2\in PW_{\frac{2\Omega^{\prime}}{b}}^2$.
By Proposition~\ref{pro:sha}, we obtain
   \[
   |\mathcal{V}_\varphi^{A} f(v,u)|=|\mathcal{V}_\varphi^{A} g(v,u)|,\,\;{\mbox{for all}}\;\; u \in \mathbb{R.}
   \]
It follows from Lemma~\ref{lem:ST} that
\begin{equation*}
\mathcal{A}\tilde{f}(bv,u)\overline{\mathcal{A}\varphi(bv,u)}=\mathcal{A}\tilde{g}(bv,u)\overline{\mathcal{A}\varphi(bv,u)},\;\;{\mbox{for}}\;\; (u,v)\in\mathbb{R}^2.
\end{equation*}
Since
\[
\mathcal{A}\varphi(0,u)\neq0,\,\;{\mbox{and}}\;\;\mathcal{A}\varphi(\gamma,u)\neq0,\;\;{\mbox{for}}\;\; a.e.\;\; u\in\left(-2{\frac{\Omega^{\prime}}{b}},2{\frac{\Omega^{\prime}}{b}}\right),
\]
we get
\[
\mathcal{A}\tilde{f}(0,u)=\mathcal{A}\tilde{g}(0,u),\;\;{\mbox{and}}\;\;\mathcal{A}\tilde{f}(\gamma,u)=\mathcal{A}\tilde{g}(\gamma,u),\;\;{\mbox{for}}\;\;\,a.e.\;\; u\in\left(-2{\frac{\Omega^{\prime}}{b}},2{\frac{\Omega^{\prime}}{b}}\right).
\]
As $f,g\in GPW_\Omega^2$, we know $\tilde{f},\tilde{g}\in PW_{\frac{\Omega^{\prime}}{b}}^2$.
Applying Lemma \ref{lem:am}, yields $\mathcal{A}\tilde{f}$ and $\mathcal{A}\tilde{g}$ are uniformly continuous, and ${\rm{supp}}(\mathcal{A}\tilde{f}),\;{\rm{supp}}(\mathcal{A}\tilde{g})\subset\mathbb{R}\times\left(-2{\frac{\Omega^{\prime}}{b}},2{\frac{\Omega^{\prime}}{b}}\right)$.
Thus
\[
\mathcal{A}\tilde{f}(0,u)=\mathcal{A}\tilde{g}(0,u),\;\;{\mbox{and}}\;\;\mathcal{A}\tilde{f}(\gamma,u)=\mathcal{A}\tilde{g}(\gamma,u),\;\;{\mbox{for all}}\;\; u\in \mathbb{R}.
\]
Therefore,
\[
|\tilde{f}|=|\tilde{g}|\;\;{\mbox{and}}\;\; \tilde{f}(t)\overline{\tilde{f}(t-\gamma)}=\tilde{g}(t)\overline{\tilde{g}(t-\gamma)},\;\; t\in \mathbb{R}.
\]
Similar to the proof of \cite[Theorem 3.1]{AlaifariRima2021} and \cite[Theorem 4.5]{LiRui2024}, for some $t_0\in \mathbb{R}$,
\[
\tilde{f}(t_0+n\gamma)=e^{j\beta} \tilde{g}(t_0+n\gamma), n\in \mathbb{Z}.
\]
Since $\frac{\Omega^{\prime}}{b}\le\frac{1}{2\gamma}$, $\tilde{f},\tilde{g}\in PW_{\frac{1}{2\gamma}}^2$. Form Proposition~\ref{pro:sha}, we have
$\tilde{f}(t_0+t)=e^{j\beta} \tilde{g}(t_0+t)$ for all $t\in \mathbb{R}.$ Hence, $\tilde{f}=e^{j\beta}\tilde{g}$. Therefore, $f=e^{j\beta}g$, which completes the proof.
\end{proof}
\end{Theorem}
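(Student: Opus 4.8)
\textbf{Proof proposal for Theorem~\ref{Thm:Osam}.}
The plan is to mirror the strategy of Theorem~\ref{Thm:Sam}, but to carry out the bandlimiting argument in the OLCT domain via the auxiliary chirp-modulated function $\tilde{f}(t)=f(t)e^{jat^2/2b}$, for which $\mathcal{F}\tilde f$ is (by (\ref{rel:COLCT})) essentially $O^A f$ up to a unimodular factor, a dilation, and a shift. The direction $(i)\Rightarrow(ii)$ is immediate. For $(ii)\Rightarrow(i)$, the first step is to reconstruct the \emph{full} continuous STOLCT magnitude $|\mathcal{V}_\varphi^{A}f(v,u)|$ for all $v\in\mathbb{R}$ from the samples in $v$. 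To do this I would compute, exactly as in (\ref{sam:1})--(\ref{Add:S8}) using (\ref{sft}) and (\ref{re:St}), that $\mathcal{F}(|\mathcal{V}_\varphi^{A}f(\cdot,u)|^2)$ equals $\frac1b$ times a self-convolution of $\mathcal{F}\tilde f\cdot T_{(u-y_0)/b}\overline{\mathcal{F}\varphi}$; since $f\in GPW_\Omega^2$ forces $\mathrm{supp}(\mathcal{F}\tilde f)\subset[-(\Omega+y_0)/b,(\Omega-y_0)/b]\subset[-\Omega'/b,\Omega'/b]$ with $\Omega'=\max\{|\Omega-y_0|,|\Omega+y_0|\}$, the product inherits this support regardless of the shift contributed by $\mathcal{F}\varphi$, hence $|\mathcal{V}_\varphi^{A}f(\cdot,u)|^2\in PW_{2\Omega'/b}^2$. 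Boundedness of $\mathcal{F}\varphi$ gives square-integrability, so Proposition~\ref{pro:sha} applied at sampling rate $1/(4\Omega'/b)=b/(4\Omega')$ recovers $|\mathcal{V}_\varphi^{A}f(v,u)|=|\mathcal{V}_\varphi^{A}g(v,u)|$ for all $v,u$.

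The second step transfers this to ambiguity functions. By Lemma~\ref{lem:ST} in its $\tilde f$ form, $\mathcal{A}\tilde f(bv,u)\overline{\mathcal{A}\varphi(bv,u)}=\mathcal{A}\tilde g(bv,u)\overline{\mathcal{A}\varphi(bv,u)}$ for all $(u,v)$. Evaluating at $v=0$ and at $bv=\gamma$ (i.e. $v=\gamma/b$) and invoking the hypotheses $\mathcal{A}\varphi(0,u)\neq0$ and $\mathcal{A}\varphi(\gamma,u)\neq0$ for a.e.\ $u\in(-2\Omega'/b,2\Omega'/b)$, we obtain $\mathcal{A}\tilde f(0,u)=\mathcal{A}\tilde g(0,u)$ and $\mathcal{A}\tilde f(\gamma,u)=\mathcal{A}\tilde g(\gamma,u)$ a.e.\ on that interval. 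Now $f,g\in GPW_\Omega^2$ gives $\tilde f,\tilde g\in PW_{\Omega'/b}^2$, so Lemma~\ref{lem:am} makes $\mathcal{A}\tilde f,\mathcal{A}\tilde g$ uniformly continuous with support in $\mathbb{R}\times(-2\Omega'/b,2\Omega'/b)$; hence the two equalities hold for \emph{all} $u\in\mathbb{R}$. Taking inverse FT in the second slot (using (\ref{Add:S6}) and the definition (\ref{am})) yields $|\tilde f|=|\tilde g|$ and the ``off-diagonal'' relation $\tilde f(t)\overline{\tilde f(t-\gamma)}=\tilde g(t)\overline{\tilde g(t-\gamma)}$ for all $t$.

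The third step is the discrete phase-propagation argument of Alaifari--Rima and Li et al.: writing $\tilde f(t)=|\tilde f(t)|e^{i\theta_f(t)}$ and similarly for $\tilde g$, on the set where $\tilde f$ does not vanish the quotient $\tilde f/\tilde g$ has modulus $1$, and the relation at shift $\gamma$ forces $e^{i(\theta_f(t)-\theta_g(t))}=e^{i(\theta_f(t-\gamma)-\theta_g(t-\gamma))}$, so along each coset $t_0+\gamma\mathbb{Z}$ the phase difference is constant; one fixes a single $t_0$ and a single constant $e^{j\beta}$ to get $\tilde f(t_0+n\gamma)=e^{j\beta}\tilde g(t_0+n\gamma)$ for $n\in\mathbb{Z}$. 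Finally, the choice $\gamma\in(0,b/(2\Omega')]$ means $\Omega'/b\le 1/(2\gamma)$, so $\tilde f,\tilde g\in PW_{1/(2\gamma)}^2$ are determined by their samples on $t_0+\gamma\mathbb{Z}$ via Proposition~\ref{pro:sha}, giving $\tilde f=e^{j\beta}\tilde g$ on all of $\mathbb{R}$ and therefore $f=e^{j\beta}g$.

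The main obstacle is the third step: passing from the two pointwise identities $|\tilde f|=|\tilde g|$ and $\tilde f(t)\overline{\tilde f(t-\gamma)}=\tilde g(t)\overline{\tilde g(t-\gamma)}$ to a \emph{global} constant phase. The delicate points are handling the zero set of $\tilde f$ (where the phase is undefined and cosets may fail to link up) and ensuring that the single constant $\beta$ chosen on one coset is the right one on every coset—this is exactly where bandlimitedness is essential, since a $PW_{1/(2\gamma)}^2$ function cannot vanish on a whole coset unless it vanishes identically, and the sampling theorem rigidly ties all cosets together. I would import this step wholesale by citing \cite[Theorem 3.1]{AlaifariRima2021} and \cite[Theorem 4.5]{LiRui2024}, checking only that the bandwidth bookkeeping ($\Omega'/b$ versus $1/(2\gamma)$) matches their hypotheses.
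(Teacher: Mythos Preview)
Your proposal is correct and follows essentially the same route as the paper's own proof: reduce to $\tilde f$ via (\ref{rel:COLCT}), show $|\mathcal{V}_\varphi^{A}f(\cdot,u)|^2\in PW_{2\Omega'/b}^2$ and apply Shannon sampling to recover the full magnitude, invoke Lemma~\ref{lem:ST} at $bv=0,\gamma$ together with Lemma~\ref{lem:am} to obtain $|\tilde f|=|\tilde g|$ and the shifted product identity, then cite \cite[Theorem 3.1]{AlaifariRima2021}/\cite[Theorem 4.5]{LiRui2024} and a second application of Proposition~\ref{pro:sha}. Your added commentary on the zero-set/coset subtlety in the phase-propagation step is apt but, as you note, is exactly what is outsourced to those references in both your outline and the paper.
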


Form Theorem~\ref{Thm:Osam}, we know that a complex-valued function $f\in GPW_{\Omega}^2$ can be uniquely determined by its STOLCT magnitude-only measurements up to a global phase, providing window function satisfies some mild properties.

\begin{Remark}
Our uniqueness results on STOLCT phase retrieval are natural extensions of the corresponding results for STFT \cite{AlaifariRima2021} and STLCT \cite{LiRui2024}.
\end{Remark}

\section{Conclusion}\label{conl}
In this paper, we study the uniqueness of phase retrieval problem form magnitude-only OLCT/STOLCT measurements. First, we discuss the non-trivial ambiguities in continuous OLCT phase retrieval, and show that a compactly supported function can be uniquely recovered from its multiple magnitude-only OLCT measurements. Moreover, we extend the results on non-trivial ambiguities to discrete signals. Finally, for non-separable signals and bandlimited signals, we propose several sufficient and necessary conditions for uniquely reconstructing the original signals from STOLCT amplitude measurements, respectively. In the future, we will still focus on the uniqueness of STOLCT phase retrieval problem for spline functions, or vector functions, and investigate some reconstruction algorithms and convergence analysis for OLCT phase retrieval as well.


 \section*{Acknowledgements}
   This work is supported partially by the National Natural Science Foundation of China (Grant no. 12261059), and the Natural Science Foundation of Jiangxi Province (Grant no. 20224BAB211001).

 \section*{Conflict of Interest}
    This work does not have any conflicts of interest.

 \section*{Data Availability Statement}
    Data sharing is not applicable to this article as the manuscript has no associated data.

\end{document}